\title{Dynamic Epistemic Logic Games with Epistemic Temporal Goals}
\author{Bastien~Maubert}{Universit\'a degli Studi di Napoli ``Federico~II'', Italy}{}{}{}
\author{Aniello~Murano}{Universit\'a degli Studi di Napoli ``Federico~II'', Italy}{}{}{}
\author{Sophie Pinchinat}{Universit\'e de Rennes, France}{}{}{}
\author{Francois Schwarzentruber}{ENS Rennes, France}{}{}{}
\author{Silvia Stranieri}{Universit\'a degli Studi di Napoli ``Federico~II'', Italy}{}{}{}
\authorrunning{B. Maubert, A. Murano, S. Pinchinat, F. Schwarzentruber \& S. Stranieri}
\keywords{}
\newif\ifdraft\draftfalse
\begin{document}

\maketitle
\begin{abstract}
  Dynamic Epistemic Logic (DEL) is a logical framework in which one
  can describe in great detail how actions are perceived by the
  agents, and how they affect the world.
  DEL games were recently introduced as a way to define classes of
  games with imperfect information where the actions available to the
  players are described very precisely. This framework makes it possible to define easily, for instance, classes
  of games where players can only use public actions or public
  announcements. These games have been studied for reachability
  objectives, where the aim is to reach a situation satisfying some
  epistemic property expressed in epistemic logic; several
  (un)decidability results have been established.

  In this work we show
 that the decidability results obtained for reachability objectives
 extend to a much more general class of winning conditions, namely
 those expressible in the epistemic temporal logic LTLK. To do so we
 establish that the infinite  game structures generated by DEL public actions are regular,
 and we describe how to obtain finite representations on which we rely
 to solve them.
\end{abstract}

\section{Introduction}

\begin{table*}[]
	\begin{center}
		\begin{tabular}{|l|l|l|l|}
			\hline
			{\bf Type of DEL game presentation}& {\bf Logic}  & {\bf Complexity}   \\
			\hline
			propositional actions and hierarchical information & \LTLK  & decidable (Th.~\ref{theo-propositional}) \\
			\hline
			public actions & \LTLK  & \TWOEXPTIME-complete (Th.~\ref{theo-publicactions})\\
			\hline
			public announcements & \fragmentLTLKUnonesting   & \PSPACE-complete (Th.~\ref{theorem:objpubannoun})  \\
			\hline
		\end{tabular}
	\end{center}
	\caption{Summary of our contribution. \label{table:results}}
\end{table*}

Strategic reasoning in  multi-agent systems 
refers to a number of important issues for settings where a team of
agents  have to take
decisions in order to achieve some goals, while evolving in  an environment
that may pursue different objectives. 
Application domains are numerous (economics, robotics, distributed
computing systems, web services, etc). For instance, drones patrolling
an area may have to decide which trajectory to take so that the status
(safe or unsafe) of each zone in this area is always known to at
least one of them, while antagonistic agents try to keep the status of
some areas secret.  It is a real challenge to automatically compute adequate individual strategies for the
agents. In this work we consider the   \emph{distributed strategy
  synthesis problem}, in which a team of agents collaborates towards a common
goal, while the environment is purely antagonistic.

Because agents typically have a local view of the system, such
situations are usually modelled as imperfect information game
arenas, \ie, graphs whose nodes represent positions of the game, edges are the
possible actions, and equivalence relations capture indistinguishability of positions.  To reflect imperfect information, strategies must
prescribe the same action in indistinguishable situations; such
strategies are classically called \emph{uniform} or
\emph{observation-based}
strategies~\cite{DBLP:journals/jcss/Reif84,herzig2006knowing,DBLP:journals/iandc/BerwangerCWDH10,DBLP:journals/logcom/PileckiBJ17}. Also
the goal, or \emph{winning condition}, is often expressed in some
logical language such as \LTL~\cite{pnueli1989synthesisshort,PR90} or
\LTLK, its
extension with knowledge
operators~\cite{van1998synthesis,DBLP:conf/concur/MeydenW05}.
For the patrolling example, one could consider the \LTLK formula
$\G\bigwedge_{\text{zone}}\bigvee_{\text{agent}}(\K[\text{agent}]
\,\text{safe}_{\text{zone}} \vee \K[\text{agent}] \neg
\text{safe}_{\text{zone}})$ which says that always, for every zone,
some agent knows whether it is safe or not.

Distributed strategy
synthesis  is known to be undecidable
\cite{DBLP:conf/focs/PetersonR79,PR90}, but the numerous literature on
the topic has
identified two main decidable cases:  the case where actions in
the games are \emph{public} (known to all agents)~\cite{van1998synthesis,DBLP:conf/concur/MeydenW05,ramanujam2010communication,BelardinelliLMR17a,DBLP:conf/ijcai/BelardinelliLMR17,DBLP:conf/fossacs/Bouyer18},
and the case of \emph{hierarchical information} (the set of
agents can be totally ordered so that what is known propagates along
this order)~\cite{kupermann2001synthesizing,PR90,DBLP:journals/fmsd/GastinSZ09,
  peterson2002decision,DBLP:conf/lics/FinkbeinerS05,pinchinat2005decidable,DBLP:conf/atva/ScheweF07,berwanger2018hierarchical}.

However, the  state explosion problem makes game
structures often very large, making distributed 
synthesis intractable. In order to circumvent this difficulty, we
promote a \emph{planning approach} using Dynamic Epistemic Logic
(DEL)~\cite{DitmarschvdHoekKooi}: instead of representing explicitly
the game structure, we consider implicit descriptions by means of
\emph{DEL presentations}. These consist in a finite initial
\emph{epistemic model} that reflects the initial knowledge of the
agents, and a finite set of \emph{epistemic actions} available to them
and the other agents in the environment. Such implicit descriptions
make it easier for the modeller to add, modify or remove
actions. Also, since DEL action models were introduced to represent in
detail how events are observed by agents, this setting is very convenient
 to define various types of actions such as public actions or
semi-private announcements, and study how restricting to such actions
can make distributed synthesis easier.

While DEL presentations have been widely used in epistemic planning
(finding a finite succession of events that achieves
some epistemic property)~\cite{DBLP:journals/corr/Bolander17},
only recently have adversarial aspects been considered in this
setting, along with strategic problems such as 
 distributed strategy
 synthesis. In~\cite{DBLP:conf/ijcai/MaubertPS19},
 agents are split into two antagonistic teams
$\agentsexists$ and $\agentsforall$, and agents in
$\agentsexists$ pursue some goal while agents in $\agentsforall$ try to prevent them
 from winning (these are  zero-sum games).  For
 reachability objectives  where the team
$\agentsexists$ wants to reach a situation that satisfies some epistemic
property, 
it is shown in~\cite{DBLP:conf/ijcai/MaubertPS19} that, as in the setting of
explicit game arenas,  distributed strategy synthesis is  undecidable,
but decidability can be recovered for the case of public actions and hierarchical information. 


In this work we lift these decidability results from reachability goals to the much larger family of
winning conditions expressible in the temporal epistemic logic \LTLK,
that blends temporal operators and epistemic modalities. 
This logic can express the reachability objectives
from~\cite{DBLP:conf/ijcai/MaubertPS19} (they correspond to formulas
of the form $\F\phi$, where $\F$ is the ``Finally'' temporal operator and $\phi$ is an epistemic formula), but also
 safety, liveness properties, and many more (see for
 instance~\cite{DBLP:conf/concur/MeydenW05} for a detailed security example).

 In all our decidability results, a crucial step is to show that the
 game arena induced by a DEL game presentation, which is in general
 infinite, can be represented finitely.  This was already known for
 the case of actions whose preconditions do not involve knowledge but
 are purely propositional formulas (so-called \emph{propositional
   actions})~\cite{DBLP:phd/hal/Maubert14,AiML2018Gaetanetal}, and we
 use it to transfer an existing result for distributed synthesis in
 explicit game arenas with hierarchical information. The main
 technical contribution of this work is to prove that the infinite
 game generated from a DEL game presentation is regular also in the case of
 public actions. This is done by observing that, modulo isomorphism,
 such actions can only generate finitely many different epistemic
 models from the initial one, thus allowing us to get an equivalent finite game as the quotient of
 the infinite one. This, combined with a recent result on game arenas
 with public actions~\cite{DBLP:conf/atal/BelardinelliLMR17}, yields a
 procedure that runs in doubly exponential time, just as the case of
 \LTL games~\cite{pnueli1989synthesisshort}. Additionally, for public
 announcements (a special case of public actions) and the syntactic fragment of \LTLK without next
 operator and local knowledge properties only, we show an even stronger characteristic of game arenas that allows us to
 reduce to a polynomial-length horizon game and to derive an optimal \PSPACE procedure.


 \vskip1em
 \noindent {\bf Related work.} The only work on DEL games that we are
 aware of is~\cite{DBLP:conf/ijcai/MaubertPS19}, and it only considers
 reachability objectives. However our results relate to the many  aforementioned
 results for distributed synthesis in explicit game
 structures with either public actions or hierarchical
 information. Those dealing with epistemic temporal logic are the
 closest to ours and can be found
 in~\cite{DBLP:conf/mfcs/Puchala10,DBLP:conf/kr/MaubertM18} for
 hierarchical information,
 and~\cite{van1998synthesis,DBLP:conf/concur/MeydenW05,DBLP:conf/atal/BelardinelliLMR17,DBLP:conf/ijcai/BelardinelliLMR17}
 for public actions.


  \vskip1em
\noindent {\bf Plan.}  Section~\ref{sec-prelim} recalls games with
imperfect information and the logic \LTLK. Section~\ref{sec-del-games}
recalls DEL game presentations as defined in
~\cite{DBLP:conf/ijcai/MaubertPS19}. The central
sections~\ref{sec-propositional}, \ref{sec:publicactions},
\ref{sec:publicannouncements} describe our contributions for the cases
of propositional actions, public actions and public announcements,
respectively. We discuss our results in Section~\ref{sec:conclusion}.


\section{Preliminaries}
\label{sec-prelim}

In this section we recall basics about games with imperfect
information and the epistemic temporal logic \LTLK.

\subsection{Games with imperfect information}
\label{sec-games}

We consider multiplayer game arenas with imperfect information in the
spirit of, \eg,~\cite{schobbens2004alternating,DBLP:journals/jancl/JamrogaA07,DBLP:journals/acta/BerwangerMB18}.
Since the DEL games we define in the next section are turn-based, \ie, the
agents play in turns and not concurrently, we define
turn-based arenas instead of the more general concurrent ones
usually considered in the aforementioned works.

For the rest of the paper let us fix  a countable set of \emph{atomic
  propositions} $\AP$ and a finite set of \emph{agents}
$\agtset$ that is partitioned  into two antagonistic teams, $\agentsexists$ and $\agentsforall$.

\begin{definition}
\label{def-game-arena}
A \emph{game arena} 
$\gamestruct=(\setpos,\setposinit,\setact,\trans,\turnfunc,(\epistemicrelationequiv{\agent})_{\agent\in
  \Ag},\valuationfunction)$ is a tuple where:
\begin{itemize}
\item $\setpos$ is a non-empty set
  of \emph{positions},
\item $\setposinit\subseteq\setpos$ is the set of \emph{initial
   positions},
\item $\setact$ is a non-empty set of \emph{actions},
\item $\trans : \setpos \times \setact \rightharpoonup \setpos$ is a
partial  \emph{transition function}, 
\item $\turnfunc:\setpos\to\agtset$ is a \emph{turn function},
\item $\epistemicrelationequiv{\agent}\,\subseteq \setpos\times\setpos$
  is an \emph{indistinguishability relation} for agent $\agent$, and
\item $\valuationfunction:\setpos\to 2^\AP$ is a \emph{labelling} or
  \emph{valuation function}.
\end{itemize}
\end{definition}

In a position $\pos$, agent $\turnfunc(\pos)$ chooses an action $\act$
such that $\pos'=\trans(\pos,\act)$ is defined, and the game proceeds
similarly from position  $\pos'$. We let $\setact(\pos)=\{\act\mid
(\pos,\act)\in\dom(\trans)\}$, where $\dom$ denotes the domain, and we assume that
$\setact(\pos)\neq\emptyset$ for every position
$\pos$.

A \textit{play} $\play=\explay$ in $\gamestruct$ is
an infinite sequence of positions  such that for all $i\in \mathbb{N}$,
there exists $\act\in\setact$ such that $\pos_{i+1}=\trans(\pos_i,\act)$. We let  $\playposi=\pos_i$ and
$\play_{\leq i}=\exhistory[i]$. We also let $\setplays$ denote the set
of plays in $\gamestruct$.  A \textit{history}
$\hist=\exhistory$ is a finite nonempty prefix of a play,
$\last(\hist)=\pos_n$ is the last position in $\hist$ and
$\sethistories$ is the set of histories in $\gamestruct$. We may
write $\turnfunc(\hist)$, $\setact(\hist)$ and
$\valuationfunction(\hist)$ for, respectively, $\turnfunc(\last(\hist))$, $\setact(\last(\hist))$ and
$\valuationfunction(\last(\hist))$.

The indistinguishability relation $\epistemicrelationequiv{\agent}$ is
an equivalence relation between positions  of the
game arena  that represents how agent $\agent$ observes them:
$\pos\epistemicrelationequiv{\agent}\pos'$ means that agent $a$ 
cannot distinguish between positions $\pos$ and $\pos'$. As a result, we assume
that if $\pos\epistemicrelationequiv{\agent}\pos'$ for some agent $\agent$, then
$\turnfunc(\pos)=\turnfunc(\pos')$, which means that  agents know
whose turn it is to play. In addition, if $\turnfunc(\pos)=\agent$ and
$\pos\epistemicrelationequiv{\agent}\pos'$, we assume that
$\setact(\pos)=\setact(\pos')$, meaning that the agent who has to make a move knows
which actions are available. 

We consider agents that have \emph{synchronous perfect
  recall}, \ie they remember the whole sequence of observations they
made, and know how many moves have been made. To model this, each
indistinguishability relation $\epistemicrelationequiv{\agent}$ is lifted to histories as
follows: $\hist\epistemicrelationequiv{\agent}\hist'$ if $|\hist|=|\hist'|$
and  for every $i< |\hist|$ it holds that
 $\hist_i\epistemicrelationequiv{\agent}\hist'_i$.

A \textit{strategy} for agent $\agent$ is a partial function
$\strategy : \sethistories \rightharpoonup \setact$ such that for
every $\history$ with $\turnfunc(\history)=\agent$, it holds
that $\strategy(\history)\in\setact(\hist)$.
Because agents can only base their decisions on what they observe,
their strategies must assign the same action to indistinguishable
situations: 
a strategy $\strategy$ for agent $\agent$ is \emph{uniform}
if, for all histories $\hist,\hist'$ such that
$\turnfunc(\hist)=\turnfunc(\hist')=\agent$ and
$\hist\epistemicrelationequiv{\agent}\hist'$, it holds that $\strategy(\hist)=\strategy(\hist')$.
We say that a play $\play$ \emph{follows} a strategy $\strategy$ for
agent $\agent$ if for every $i\in\N$ such that $\turnfunc(\play_{\leq
  i})=\agent$, it holds that $\playposi[i+1]=\trans(\play_i,\strategy(\play_{\leq i}))$.
A \emph{distributed strategy} for  a group of agents $\coal\subseteq\agtset$
 is a tuple $\disstrategy[\coal]=(\strategy_\agent)_{\agent\in\coal}$, and we
 write $\outcome(\disstrategy[\coal])$ the set of \emph{outcomes} of
 $\disstrategy[\coal]$, \ie the set of plays that start in $\setposinit$ and
 follow each $\strategy_\agent$ for $\agent\in\coal$.

A \emph{game} $\game=(\gamestruct,\win)$ is a 
 game arena $\gamestruct$ with a  \emph{winning
condition} $\win\subseteq \setplays$.
Team $\agentsexists$
\emph{wins} a game $\game$  if there is a distributed strategy
$\disstrategy$ such that every play in
$\outcome(\disstrategy)$  is in $\win$.

\begin{definition}
  \label{sec-unfolding}
Let
  $\gamestruct=(\setpos,\setposinit,\setact,\trans,\turnfunc,(\epistemicrelationequiv{\agent})_{\agent\in
    \Ag},\valuationfunction)$ be a game arena. We define the \emph{unfolding} of
  $\gamestruct$ as the game arena
  $\gameunfold=(\setpos',\setposinit',\setact,\trans',\turnfunc',(\epistemicrelationequiv{\agent}')_{\agent\in
    \Ag},\valuationfunction')$ where $\setpos'=\sethistories$, $\setposinit'=\setposinit$, for
  every $\hist\in\sethistories$,
  $\turnfunc'(\hist)=\turnfunc(\last(\hist))$ and
   $\valuationfunction'(\hist)=\valuationfunction(\last(\hist))$, 
  $\epistemicrelationequiv{\agent}'$  is  the synchronous perfect-recall
lifting of $\epistemicrelationequiv{\agent}$ to histories, and
  \[\trans'(\hist,\act)=
    \begin{cases}
      \hist\cdot \pos & \mbox{if }\trans(\last(\hist),\act)=\pos \\
      \text{undefined}  & \mbox{if }\trans(\last(\hist),\act) \mbox{
        is undefined}
    \end{cases}
\]
\end{definition}

The natural bijection between plays of
$\gamestruct$ and plays of $\gameunfold$ induces a winning condition $\winunfold$ over arena
$\gameunfold$. Additionally, because of the natural bijection between
strategies in $\gamestruct$ and strategies in $\gameunfold$, $\agentsexists$ wins
$(\gamestruct,\win)$ if, and only if, $\agentsexists$ wins
$(\gameunfold,\winunfold)$. We say that two game structures
$\game$ and $\game'$ are \emph{equivalent} whenever their unfoldings are
isomorphic\footnote{Some looser notion of bisimulation between games
could also be considered but isomorphism fits here.}.

In this work we are interested in winning conditions expressed in the
logic of knowledge and time called \LTLK (standing for linear temporal logic with knowledge), which extends \LTL with
knowledge operators for each agent.

\subsection{Linear-time temporal logic with knowledge}
\label{sec-ltlk}


The syntax of \LTLK is given by the following grammar:
\begin{equation*}
\phi ::= p \mid \lnot \phi \mid \phi \vee \phi \mid
\next\phi \mid \phi \until \phi \mid \K \phi 
\end{equation*}
where $p\in\AP$ and $\agent\in\agtset$.
The formula
$\next\phi$ reads as ``at the next step, $\phi$ holds'',
$\phi\until\phi'$ reads as ``$\phi$ holds until $\phi'$ holds'', and
$\lknow\agent\phi$  is read ``agent $\agent$ knows that
$\phi$ is true''.


The
\emph{size} $|\phi|$ of a formula $\phi$ is the number of symbols in
it.

We exhibit two important syntactic fragments of \LTLK: \emph{Epistemic
  Logic} $\languageEL$  obtained by removing temporal operators
$\next$ and $\until$, \ie, generated by grammar
\[\phi ::= \atom \mid \neg\phi \mid \phi\lor\phi
\mid \lknow\agent\phi,\] 
and \emph{Propositional logic}
$\languagePropositional$ obtained by also removing the knowledge
modality.




The logic \LTLK is interpreted over a moment $i\in\N$ along a play $\play\in\setplays$ in a game
arena 
$\gamestruct=(\setpos,\setact,\trans,\turnfunc,(\epistemicrelationequiv{\agent})_{\agent\in
  \Ag},\valuationfunction)$. We write $\gamestruct,\play,i \models \phi$, and
read `formula $\phi$ holds at moment $i$ along play $\play$ of game
arena $\gamestruct$', defined inductively over $\phi$ as follows.
\begin{align*}
  \gamestruct,\play,i &\models p &&\ifdef p\in \valworlds(\play_{i})\\
  \gamestruct,\play,i &\models \lnot \phi &&\ifdef\gamestruct,\play,i \not \models \phi\\
  \gamestruct,\play,i &\models \phi_1 \vee \phi_2
                                 &&\ifdef\gamestruct,\play,i \models
                                    \phi_1 \; \mbox{ or }\;\gamestruct,\play,i \models \phi_2\\
  \gamestruct,\play,i &\models \next\phi &&\ifdef\gamestruct,\play,i+1 \models \phi\\
  \gamestruct,\play,i &\models \phi_1 \until \phi_2
                                 &&\ifdef\exists i'\geq i
                                    \;\text{ s.t.
                                    }\;\gamestruct,\play,i'
                                           \models \phi_2 \;\text{
                                    and }\\
                                          & && \quad\;\;\forall i'' \;\text{ s.t. }\; i\leq i''<i', \gamestruct,\play,i'' \models \phi_1\\
  \gamestruct,\play,i &\models \K\phi &&\ifdef \forall
                                            \play'\in\setplays \;\text{ s.t. }\;
                                            \play'_{\leq i} \relequiva
                                         \play_{\leq i}, \\
  & && \quad\;\;\gamestruct,\play',i \models \phi
\end{align*}

An \LTLK formula $\phi$ naturally denotes a winning
condition:
\[\win_\phi=\{\play\in\setplays\mid \gamestruct,\play,0\models\phi\}.\]


\section{DEL games}
\label{sec-del-games}

In this section we recall the definition of DEL games as recently
introduced in~\cite{DBLP:conf/ijcai/MaubertPS19}.
We start with definitions for  epistemic
models and DEL event models.

\subsection{The classic DEL setting}
\label{sec-EL}


In the usual possible-worlds semantics of epistemic logic, models are Kripke
structures with interpretations for atomic
propositions~\cite{Fagin95knowledge}.

\label{sec-defDEL}
\begin{definition}
	\label{definition:epistemicmodel}
	An \emph{epistemic model} $\kripkemodel = (\setworlds,
        (\epistemicrelation{\agent})_{\agent \in \agtset},
        \valuationfunction)$ is a tuple  where 
	\begin{itemize}
		\item
 $\setworlds$ is a non-empty finite set of possible \emph{worlds} (or situations),
		\item
 $\relworldsa\subseteq \setworlds\times \setworlds$
		is an
		\emph{indistinguishability relation} for agent $\agent$, and
		\item
 $\valuationfunction:\setworlds\rightarrow 2^\AP$ is a \emph{valuation function}.
	\end{itemize}
\end{definition}

\tikzstyle{agenta}=[line width=2]
\tikzstyle{agentb}=[dashed]

We may write $\world\in\model$ for $\world\in\setworlds$. 
As for games with imperfect information introduced in the previous
section, we assume that indistinguishability
relations~$\epistemicrelation{\agent}$ are equivalence
relations.
The valuation function $\valuationfunction$
defines which atomic propositions  hold in a world.  A
pair $(\kripkemodel,\world)$ where $\world\in\model$ is called a \emph{pointed epistemic
  model}, and we define $|\kripkemodel|=|\setworlds| + \sum_{\agent \in \agtset} |\epistemicrelation{\agent}|
+ \sum_{\world \in \setworlds} |\valuationfunction(\world)|$, the
\emph{size} of $\kripkemodel$. We will
only consider finite models, \ie we assume that $\setworlds$ is finite
and $\valuationfunction(\world)$ is finite for all worlds $\world$.


\begin{definition}
  We define $\kripkemodel, \world\models \formula$, read as
`formula $\formula$ holds in the pointed epistemic model
$(\kripkemodel, \world)$', by induction on $\phi$, as follows:
	\begin{itemize}
		\item $\kripkemodel,\world\models \atom$ if $\atom\in\valuationfunction(\world)$;
		\item $\kripkemodel,\world\models \neg\phi$ if $\kripkemodel,\world\not\models\phi$;
		\item $\kripkemodel,\world\models \phi_1\lor\phi_2$ if
		$\kripkemodel,\world\models\phi_1$ or $\kripkemodel,\world\models\phi_2$;
\item $\kripkemodel,\world\models\lknow\agent\phi$ if for all $\worldb$ such that $\world\relworldsa\worldb$, $\kripkemodel,\worldb\models\phi$.
	\end{itemize}
\end{definition}

Dynamic Epistemic Logic (DEL) relies on \emph{action models} (also
called ``event models'').  These models
specify how agents perceive the occurrence of an action as well as its effects on the world.

\begin{definition}
	\label{def-actionmodel}
	An \emph{action model}
	$\eventmodel = \eventmodeltuple$ is a tuple  where:
	\begin{itemize}
		\item
 $\setevents$ is a non-empty finite set of  possible \emph{actions},
		\item
 $\epistemicrelationevents{\agent}\subseteq\setevents\times \setevents$
		is the \emph{indistinguishability
			relation}  for agent $\agent$,
		\item
 $\pre: \setevents\to \languageEL$  the \emph{precondition function}, and
		\item
		 $\post: \setevents \times \AP \to
                 \languagePropositional$ is the
                 \emph{postcondition function}.
	\end{itemize}
\end{definition}

A \emph{pointed action model} is a pair $(\eventmodel,\event)$
where $\event$ represents the actual action. 
We let $|\eventmodel|$ be the \emph{size} of
$\eventmodel$, which we define as
$|\eventmodel|\egdef |\setevents|+\sum_{\agent \in \agtset} |\epistemicrelationevents\agent| + \sum_{\event\in\setevents}|\pre(\event)|+\sum_{\event\in\setevents,p\in\AP}|\post(\event,p)|$.


An action $\event$ is \emph{executable} in a world $\world$ of an
epistemic model $\epsmodel$ if
$\kripkemodel,\world\models\pre(\event)$, and in that case we define
$\postval(\world,\event):=\set{\atom \in \AP \suchthat
  \kripkemodel, \aworld \models \post(\event, \atom) }$, the set of
atomic propositions that hold after occurrence of action $\event$ in
world $\world$. 

\paragraph*{Types of actions}  An action model $\eventmodel$ is \emph{propositional} if all
pre- and postconditions of actions in $\eventmodel$ belong to
$\languagePropositional$. A \emph{public action} is a pointed action
model $\eventmodel,\event$ such that for each
agent $\agent$, $\epistemicrelationevents{\agent}$ is the identity relation.

The effect of the execution of an action in an epistemic model is
captured by the update product~\cite{baltag1998logic}:

\begin{definition}
	\label{definition:product}
	Let
        $\kripkemodel = (\setworlds,
        (\epistemicrelation\agent)_{\agent \in \agtset},
        \valuationfunction)$ be an epistemic model, and
        $\eventmodel = (\setevents,
        (\epistemicrelationevents{\agent})_{\agent \in \agtset},\pre,
        \post)$ be  an action model. The \emph{product} of
        $\kripkemodel$ and $\eventmodel$ is defined as
        $\kripkemodel \otimes \eventmodel = (\setworlds',
        (\epistemicrelation\agent)', \valuationfunction')$ where:
	\begin{itemize}
		\item $\setworlds' = \set{(\aworld, \event) \in \setworlds \times \setevents \suchthat \kripkemodel, \aworld \models \pre(\event)}$,
		\item $(\aworld,\event) \epistemicrelation\agent'
                  (\aworld',\event')$ if $\aworld
                  \epistemicrelation\agent \aworld'$ and $\event
                  \epistemicrelationevents{\agent} \event'$, and
		\item $\valuationfunction'((\aworld, \event)) = \postval(\world,\event)$.
	\end{itemize}
\end{definition} 



\newcommand{\epistemicmodelexamplepicture}{
		\begin{tikzpicture}[scale=0.6]
\tikzstyle{double_edge} = [latex'-latex',double]
\node[world, realworldarrowfromleft] (p) at (0,0) {$\world: \{p\}$};
\node[world] (notp) at (0,-2) {$\worldb: \emptyset$};
\draw[double_edge] (p) edge  node[left] {$\agenta, \agentb$} (notp);
\draw[-latex] (p) edge[loop right,  looseness=2.5] node[] {$\agenta, \agentb$} (p);
\draw[-latex] (notp) edge[loop right,  looseness=2.5] node[] {$\agenta, \agentb$} (notp);
\end{tikzpicture}}

\newcommand{\eventmodelexamplepicture}{	\begin{tikzpicture}[scale=0.8]
	\tikzstyle{double_edge} = [latex'-latex',double]
	\node[event, realworldarrowfromtop] (p) at (0,0) {$\event: \eventinfigure{p} \postconditionpfalse$};
	\node[event] (notp) at (3,0) {$\eventb: \eventinfigure{\top} \postconditiontrivial$};
	\draw[-latex] (p) edge  node[above] {$\agentb$} (notp);
	\draw[-latex] (p) edge[loop left, looseness=2] node[left] {$\agent$} (p);
	\draw[-latex] (notp) edge[loop right, looseness=2] node[right] {$\agenta, \agentb$} (notp);
	\end{tikzpicture}}

\begin{figure}
	\begin{center}
	
		\scalebox{.95}{
                  \begin{tabular}{cc}
                    &  $\eventmodel$ \\[-2mm]
			&
			\eventmodelexamplepicture \\[4mm]

			\begin{tikzpicture}[scale=0.6]
			\tikzstyle{double_edge} = [latex'-latex',double]
			\node[world, realworldarrowfromleft] (p) at (0,0) {$\world: \{p\}$};
			\node[world] (notp) at (0,-2) {$\worldb: \emptyset$};
			\draw[double_edge] (p) edge  node[left] {$\agenta, \agentb$} (notp);
			\draw[-latex] (p) edge[loop right,  looseness=2.5] node[] {$\agenta, \agentb$} (p);
			\draw[-latex] (notp) edge[loop right,  looseness=2.5] node[] {$\agenta, \agentb$} (notp);
			\end{tikzpicture}

			&

			\begin{tikzpicture}[xscale=0.8, yscale=0.6]
			\tikzstyle{double_edge} = [latex'-latex',double]
			\node[world, realworldarrowfrombottom] (p) at (0,0) {$(\world, \event): \emptyset$};
			\node[world] (p2) at (3,0) {$(\world, \eventb): \{p\}$};
			\node[world] (notp) at (3,-2) {$(\worldb, \eventb): \emptyset$};
			\draw[-latex] (p) edge  node[above] {$\agentb$} (p2);
			\draw[-latex] (p) edge  node[below] {$\agentb$} (notp);
			\draw[-latex] (p) edge[loop left, looseness=2] node[left] {$\agent$} (p);
			\draw[-latex] (notp) edge[loop right, looseness=2] node[right] {$\agenta, \agentb$} (notp);
			\draw[-latex] (p2) edge[loop right, looseness=2] node[right] {$\agenta, \agentb$} (p2);
			\draw[double_edge] (p2) edge  node[right] {$\agenta, \agentb$} (notp);
			\end{tikzpicture} \\[0mm]
                  $\kripkemodel$ &$\kripkemodel {\otimes} \eventmodel$ 
		\end{tabular}}
	\end{center}
	\caption{Example of DEL product. Symbol $\postconditiontrivial$ indicates
        the trivial postcondition that leaves valuations unchanged.}
	\label{figure:productexample}
\end{figure}

\begin{example}
  Figure \ref{figure:productexample} shows the pointed model
  $\kripkemodel, \world$ that represents a situation in which $p$ is
  true and both agents $\agent$ and~$\agentb$ do not know it. The
  pointed action model $\eventmodel, \event$ describes the action where
  agent $\agent$ learns that $p$ was true but that it is now set to
  false, while agent $\agentb$ does not learn anything (she sees action
  $\eventb$ that has trivial pre- and postcondition). In the product 
epistemic model $\kripkemodel \otimes \eventmodel, (w, \event)$,
  agent $\agent$ now knows that $p$ is false, while
  $\agentb$ still does not know the truth value of $p$, or whether agent $a$ knows it.
\end{example}

\subsection{Defining DEL games}
\label{subsec-del-games}

 We recall the definition of DEL games as introduced
 in~\cite{DBLP:conf/ijcai/MaubertPS19}, but for more general winning conditions.


The initial situation is described by an epistemic model
$\kripkemodel$, and the set of possible actions by an action model $\eventmodel$.
Because the update product in DEL can only model execution of a single
action at a time, the games that we define are turn-based. We use 
the variable $\turn$, ranging over the set of agents $\agtset$, to
represent whose turn it is to play. 
We require that 
postconditions for variable $\turn$ do not
    depend on the current epistemic situation, but instead the next value of
    $\turn$ is only determined by the
    action.
    When the precondition $\pre(\event)$ of some action
    $\event$ is satisfied, we may say that this
    action is \emph{available}.  Without loss of generality, we assume
that 
there always is at least one action
available.

The game thus starts in some world $\world\in\epsmodel$, and the agent
$\agent$ such that $\kripkemodel,\world\models \turn=\agent$ chooses
some available action $\event$ which is executed. The new epistemic
situation $\kripkemodel\otimes\eventmodel,(\world,\event)$ is given by
the update product, and the game goes on. After $n$ rounds, the
epistemic situation is described by a pointed epistemic model of the
form
$\ETLforest[n]{\epsmodel}{\eventmodel},(\world,\event_1,\ldots,\event_n)$,
where $\ETLforest[n]{\epsmodel}{\eventmodel}$ is defined by letting
$\ETLforest[0]{\epsmodel}{\eventmodel}=\epsmodel$ and
$\ETLforest[n+1]{\epsmodel}{\eventmodel}=\ETLforest[n]{\epsmodel}{\eventmodel}\otimes\eventmodel$.
In the following we may write $\world\event_1\ldots\event_n$ instead
of $(\world,\event_1,\ldots,\event_n)$, and call it a history. Given
that the model $\ETLforest[n]{\epsmodel}{\eventmodel}$ to which a
history $\world\event_1\ldots\event_n$ belongs is determined by the
length of the history, we may omit it and write, \eg,
$\world\event_1\ldots\event_n\models\phi$ instead of
$\ETLforest[n]{\epsmodel}{\eventmodel},\world\event_1\ldots\event_n\models\phi$.

In order to obtain  proper games of imperfect information,
    we will require  the following hypotheses to hold in the epistemic and
    event models defining DEL games: 

\paragraph*{Hypotheses on $\kripkemodel$ and $\eventmodel$}
  \label{hyptothesis:conditionsinputstrategyexistenceproblem}
\begin{enumerate}
\item[(H1)] 
  {\it\bf The starting player is known:}
  there is a player $\agent$ such that for all
  $\world \in \setworlds$, it holds that
  $\kripkemodel,\world \models \is\turn\agent$;
\item[(H2)] 
  {\it\bf The turn stays known:} for all
  actions $\event, \event' $ and agent $\agent$, if
  $\event \epistemicrelationequivevents{\agent} \event'$, then $\event$ and
  $\event'$ assign the same value to
  $\turn$.
\item[(H3)] 
  {\it\bf Players know their available actions:} if
  $\world\event_1\ldots\event_n\models \turn=\agent$ and
  $\world\event_1\ldots\event_n\epistemicrelationequiv{\agent}\world'\event'_1\ldots\event'_n$,
  then the same actions are available in
  $\world\event_1\ldots\event_n$ and in $\world'\event'_1\ldots\event'_n$.
\end{enumerate}

We can now define DEL games.

\begin{definition}
  \label{def-del-games}
A \emph{DEL game presentation} $(\kripkemodel,\eventmodel,\setworldsinit)$
consists of an initial epistemic model $\kripkemodel$ and an action
model $\eventmodel$ that satisfy hypotheses H1, H2 and H3, together with
a set of initial worlds $\setworldsinit$.
\end{definition}


We now describe how a DEL game presentation
$(\kripkemodel,\eventmodel,\setworldsinit)$ induces a game arena $\delgamearena$ as per Definition~\ref{def-game-arena}.

\begin{definition}
  \label{def-induced-game}
  Given a DEL game presentation $(\epsmodel,\eventmodel,\setworldsinit)$, we define
  the game arena
  $\delgamearena=(\setpos,\setposinit,\setact,\trans,\turnfunc,(\epistemicrelationequiv{\agent})_{\agent\in
    \Ag},\valuationfunction)$ where, letting
  $\ETLforest[n]{\kripkemodel}{\eventmodel}=(\setworlds^n,(\epistemicrelation{\agent}^n)_{\agent
    \in \agtset},\valuationfunction^n)$ for every $n$:
\begin{itemize}
\item $\setpos=\bigcup_{n\in\N}\setworlds^n$,
\item $\setposinit=\setworldsinit$,  
\item $\setact$ is the set of actions in $\eventmodel$,
\item $\trans(\world\event_1\ldots\event_n,\event_{n+1})=\begin{cases}
    \world\event_1\ldots\event_n\event_{n+1} &\text{if
    }\world\event_1\ldots\event_n\models\pre(\event_{n+1})\\
    \text{undefined} &\text{otherwise}
  \end{cases}
$
\item $\turnfunc(\world\event_1\ldots\event_n)=\agent$ \quad if \quad $\world\event_1\ldots\event_n\models\turn=\agent$
\item
  $\epistemicrelationequiv{\agent}\,=\,\bigcup_{n\in\N}\epistemicrelationequiv{\agent}^n$
  for each agent $\agent$, and
\item
  $\valuationfunction(\world\event_1\ldots\event_n)=\valuationfunction^n(\world\event_1\ldots\event_n)$.
\end{itemize}
\end{definition}


Observe that this game arena is infinite: the set of
  positions $\setpos$ is the set of histories. In the following
sections we will see that in some cases they admit finite
representations that we can use to decide the existence of winning
strategies.

A \emph{DEL game}
$\delgame=(\kripkemodel,\eventmodel,\setworldsinit,\win)$ consists of
a DEL game presentation $(\kripkemodel,\eventmodel,\setworldsinit)$ together with
a winning condition $\win\subseteq\setplays[{\delgamearena}]$ on the
induced game arena $\delgamearena$. 
We consider the following decision problem:

\begin{definition}[Distributed strategy synthesis for \LTLK objectives]
  \label{definition:multiepistemicgameproblemobj}
  ~
  
	\problemdefinition{A DEL game
          $\delgame=(\kripkemodel,\eventmodel,\setworldsinit,\phi)$
          with $\phi\in\LTLK$;}{Does team
            $\agentsexists$ win the game $(\delgamearena,\phi)$?}
\end{definition}

Note that the games studied in~\cite{DBLP:conf/ijcai/MaubertPS19}
correspond to the class of DEL games where the winning condition is given
by \LTLK formulas of the form $\F\phi$, where $\phi$ is purely
epistemic.

\begin{remark}
  \label{rem-uninformed}
When we evaluate whether a tuple of strategies
$(\strategy_\agent)_{\agent\in\agentsexists}$ is winning for an \LTLK
formula $\phi$, the semantics of the knowledge operators in $\phi$
does not depend on the strategies $\strategy_\agent$. 
In particular, it is not restricted to
indistinguishable histories that follow these strategies,
but instead it considers all indistinguishable histories in the game.
This semantics models situations in which agents do not know the strategies of
agents $\agentsexists$, and it is the one also used
in~\cite{DBLP:conf/ijcai/MaubertPS19} 
 and in DEL epistemic
planning~\cite{DBLP:journals/jancl/BolanderA11,DBLP:conf/ijcai/AucherB13,bolander2015complexity,DBLP:journals/corr/Bolander17},
where the agents do not know which plan is being executed.
This semantics is
called~\emph{uninformed semantics} in~\cite{DBLP:conf/kr/MaubertM18},
contrary to the
\emph{informed} one.
See also~\cite{DBLP:conf/mfcs/Puchala10,DBLP:phd/hal/Maubert14} for more discussions on the matter.
\end{remark}

\subsection{Discussion on initial positions}
\label{sec-particular-cases}

One subtlety that arises when formalising existence of winning strategies under imperfect
information is in defining what having a winning strategy means.
For instance, are we satisfied with the agents in $\agentsexists$ having a distributed strategy 
that is winning from the initial position of the game, even if they do
not \emph{know} that it is winning, in the sense that there is a world
that some agent in $\agentsexists$ considers like a possible initial
position and from which the distributed strategy is not winning?
Or instead do we want everybody in the team to know that the team's
strategy is winning? 
These two notions have sometimes been  called \emph{objective winning}
and \emph{subjective winning}, respectively
(see~\cite{bulling2014comparing} and
also~\cite{DBLP:journals/jancl/JamrogaA07} for similar considerations). 
We could also ask whether there is \emph{distributed knowledge} or
\emph{common knowledge}~\cite{Fagin95knowledge} that the distributed strategy is winning.

Note that we can model all of these notions by tuning the set of initial
worlds in the definition of a DEL game, as this defines the set of
outcomes that we consider, \ie, the set of
plays from which the strategies should be winning. Assume we have an
initial epistemic model $\kripkemodel$ with an
 initial world $\world_\init$. If we are interested in distributed
strategies that are objectively winning from $\world_\init$, we simply
set $\setworldsinit=\{\world_\init\}$ in the DEL game. If instead 
we want subjectively winning strategies, \ie,  strategies that
not only are winning, but such that everybody in the team
$\agentsexists$ knows that they are winning, then we let the set of
initial worlds in the DEL game be
\[\setworldsinitsubj=\{\world\in\kripkemodel
\mid \world\epistemicrelationequiv{\agent}\world_\init \mbox{ for some
}\agent\in\agentsexists\}.\]



Objective distributed strategy synthesis was studied
in~\cite{DBLP:conf/ijcai/MaubertPS19} for reachability epistemic
objectives, \ie, when an epistemic objective is given as an epistemic
formula $\phi\in\languageEL$, and $\win$ is defined as the sets of
plays $\play\in\Play{\forest}$ for which there exists $i\in\N$ such
that $\forest,\pi_{\leq i}\models \phi$. Note that such winning
conditions can be specified by \LTLK formulas $\F\phi$. It is shown that objective 
distributed strategy synthesis is undecidable for propositional
actions, already for a team of two players and reachability epistemic
objectives.
Since the problem we study is more general, we inherit this
undecidability result.

\begin{theorem}
  \label{theo-undecidable}
  Distributed strategy synthesis for \LTLK objectives is undecidable
  already for propositional actions and formulas of the form $\F\phi$,
  where $\phi$ is purely epistemic.
\end{theorem}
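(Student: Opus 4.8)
The plan is to obtain this statement as an immediate corollary of the undecidability result established in~\cite{DBLP:conf/ijcai/MaubertPS19}, by observing that the problem studied there is literally an instance of Definition~\ref{definition:multiepistemicgameproblemobj}. In~\cite{DBLP:conf/ijcai/MaubertPS19} it is shown that objective distributed strategy synthesis is undecidable already when the action model is propositional, the team $\agentsexists$ consists of two players, and the winning condition is a reachability epistemic objective given by some $\phi\in\languageEL$: the set of plays having a prefix that satisfies $\phi$ (with $\phi$ evaluated at that prefix in the epistemic sense). First I would make the correspondence with our setting explicit. Writing $\F\phi$ for $\top\until\phi$, the semantics of $\until$ gives $\play\in\win_{\F\phi}$ iff there is some $i$ with $\delgamearena,\play,i\models\phi$; and since $\phi$ is purely epistemic, $\delgamearena,\play,i\models\phi$ depends only on $\play_{\leq i}$ and on the indistinguishability structure, and it coincides with the epistemic-model evaluation $\ETLforest[i]{\kripkemodel}{\eventmodel},\play_{\leq i}\models\phi$ (the worlds of $\ETLforest[i]{\kripkemodel}{\eventmodel}$ being exactly the histories of that length, with identically defined indistinguishability relations). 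Hence the reachability epistemic objective $\phi$ and the \LTLK objective $\F\phi$ induce the very same winning condition on $\delgamearena$, so the identity map on instances is a (trivial) reduction of the problem of~\cite{DBLP:conf/ijcai/MaubertPS19} to ours. The restrictions to propositional actions and to a two-player team, and the fact that $\F\phi$ has the required shape with $\phi$ purely epistemic, are inherited verbatim, and undecidability transfers.

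Should one prefer a self-contained argument, the route would be the classical reduction behind undecidability of multiplayer imperfect-information games (Peterson--Reif~\cite{DBLP:conf/focs/PetersonR79}, Pnueli--Rosner~\cite{PR90}): reduce from the halting problem, or equivalently from a tiling or Post correspondence instance. One builds a DEL game presentation in which the antagonistic environment spells out, step by step, an alleged halting computation of a fixed Turing machine, while the observation relations of the event model route disjoint fragments of this stream to the two agents of $\agentsexists$. Since preconditions of propositional actions may be arbitrary boolean combinations of atoms, it is enough to let atoms record the letter just emitted and the identity of the current mover, with postconditions updating them; the two agents must then each eventually reach a world satisfying an epistemic formula $\phi$ asserting that what they \emph{individually} observed is consistent with a correct halting run. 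Because neither agent sees the other's fragment, a uniform winning distributed strategy exists precisely when the Turing machine halts, so deciding existence of such a strategy would decide halting.

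The part I expect to be most delicate in the direct construction is not the logical encoding --- epistemic reachability objectives are already known to be expressive enough to force the required consistency checks --- but verifying that hypotheses H1, H2 and H3 on DEL game presentations are respected: one must arrange the turn variable $\turn$ and the available-action structure so that the starting player is known, the turn stays known along indistinguishable histories, and each player always knows which of its actions are available, all while keeping every precondition in $\languagePropositional$. This bookkeeping is exactly what is carried out in~\cite{DBLP:conf/ijcai/MaubertPS19}, which is why inheriting that result, rather than redoing the encoding, is the economical route and the one I would take here.
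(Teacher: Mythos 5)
Your proof is correct and takes essentially the same route as the paper: Theorem~\ref{theo-undecidable} is obtained there precisely by observing that the reachability epistemic objectives of~\cite{DBLP:conf/ijcai/MaubertPS19} are the special case of \LTLK objectives of the form $\F\phi$ with $\phi$ purely epistemic, so the undecidability result for propositional actions (already for a two-player team) is inherited directly. The self-contained Peterson--Reif-style construction you sketch as an alternative is neither needed nor carried out in the paper.
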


In the rest of the paper we describe various cases in which decidability can be recovered.




\section{DEL games with propositional actions}
\label{sec-propositional}

Extending a result
from~\spchanged{\cite{DBLP:phd/hal/Maubert14,DBLP:journals/corr/AucherMP14}} in
the case of DEL planning, it was proved
in~\cite{DBLP:conf/ijcai/MaubertPS19} that  in
the case of propositional actions, games generated by DEL
game presentations are regular, and one can compute an equivalent
finite game arena:

\begin{proposition}[\cite{DBLP:conf/ijcai/MaubertPS19}]
  \label{prop-regular-imp}
  Given a DEL game presentation
  $(\kripkemodel,\eventmodel)$ where
  $\eventmodel$ is propositional, one can
 construct a finite game arena $\gamestruct$ equivalent to
 $\delgamearena$ such that
 $|\gamestruct|\le |\epsmodel|+|\eventmodel|\times 2^{m}$, where $m$ is
the number of atomic propositions in $\kripkemodel$ and $\eventmodel$.
\end{proposition}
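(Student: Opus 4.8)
The plan is to exploit that, when $\eventmodel$ is propositional, whether an action is executable and which valuation it produces depend only on the \emph{valuation} of the current world, not on its epistemic surroundings: a propositional formula is true at a world iff it is true of that world's valuation, so for every action $\event$ the set of histories satisfying $\pre(\event)$ and the updated valuation $\postval(\cdot,\event)$ are functions of $\valuationfunction$ of the current history alone. Consequently, along any play the sequence of valuations is computed step by step from the initial world's valuation and the actions played, and only finitely many valuations ever occur, namely subsets of the finite set $P$ of atomic propositions appearing in $\kripkemodel$ and $\eventmodel$ (every other proposition keeps a fixed truth value throughout the game). This is the one genuinely new observation; the rest is bookkeeping.

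Concretely, I would take $\gamestruct$ with positions $\setpos=\setworlds\cup(2^{P}\times\setevents)$: the worlds of $\kripkemodel$ are the positions of depth $0$, and a pair $(V,\event)$ records a positive-depth history by its current valuation $V$ together with the last action $\event$ played --- the action component must be kept because the indistinguishability relation still has to separate histories whose last actions are distinguishable. I set $\setposinit=\setworldsinit$; let $\turnfunc(\world)$ be the (unique, by H1) starting player and $\turnfunc((V,\event))$ the player $\event$ assigns to $\turn$ (well defined by H2); let $\valuationfunction(\world)$ be as in $\kripkemodel$ and $\valuationfunction((V,\event))=V$; put $\trans(\world,\event)=(V',\event)$ and $\trans((V,\event),\event')=(V'',\event')$ exactly when the (propositional) precondition holds of $\valuationfunction(\world)$, resp.\ of $V$, with $V',V''$ obtained by evaluating the (propositional) postconditions on that valuation; and relate $\world\epistemicrelationequiv{\agent}\world'$ iff $\world\epistemicrelation{\agent}\world'$ in $\kripkemodel$, $(V,\event)\epistemicrelationequiv{\agent}(V',\event')$ iff $\event\epistemicrelationevents{\agent}\event'$, with no depth-$0$ position related to one of positive depth.

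The heart of the argument is to show that the unfolding of $\gamestruct$ is isomorphic to that of $\delgamearena$ (and $\delgamearena$, being essentially a forest once we discard positions unreachable from $\setworldsinit$, is already isomorphic to its own unfolding). The candidate isomorphism sends a history $\world\event_1\cdots\event_n$ of $\delgamearena$ to $\world\,(V_1,\event_1)\cdots(V_n,\event_n)$, where $V_i$ is the valuation of $\world\event_1\cdots\event_i$; propositionality makes this a bijection commuting with $\trans$, $\turnfunc$ and $\valuationfunction$ essentially by construction. For the indistinguishability relations one iterates the update product to obtain, on the $\delgamearena$ side, that $\world\event_1\cdots\event_n\epistemicrelationequiv{\agent}\world'\event'_1\cdots\event'_n$ iff $\world\epistemicrelation{\agent}\world'$ and $\event_i\epistemicrelationevents{\agent}\event'_i$ for all $i\leq n$; and the synchronous perfect-recall lifting of $\epistemicrelationequiv{\agent}$ in $\gamestruct$ relates $\world(V_1,\event_1)\cdots$ and $\world'(V'_1,\event'_1)\cdots$ under exactly this same condition. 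So the two relations match through the isomorphism --- which is precisely why it suffices to remember, at each step, the current valuation and the last action.

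Two loose ends remain, and the second is the delicate one. First, $\gamestruct$ must be a legitimate game arena: indistinguishable positions must have the same owner (immediate from H1 and H2) and, when it is that owner's turn, the same available actions. Since available actions at $(V,\event)$ depend only on $V$, this last point reduces to checking that two reachable positions $(V,\event)\epistemicrelationequiv{\agent}(V',\event')$ with owner $\agent$ cannot disagree on any precondition; one argues this by tracing them back to genuinely indistinguishable histories of $\delgamearena$ and invoking H3 there. Getting this right --- together with making sure the finite positions carry exactly the information the perfect-recall lifting needs in order to reconstruct $\delgamearena$ --- is where the hypotheses are essential and where the real care goes. Second, the size bound is a routine count: $\setpos$ has at most $|\setworlds|+|\setevents|\cdot 2^{|P|}\le|\kripkemodel|+|\eventmodel|\times 2^{m}$ positions (with $m=|P|$), and counting the remaining components of $\gamestruct$ yields the stated inequality.
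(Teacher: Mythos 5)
The paper does not actually prove this proposition here --- it is recalled from the cited IJCAI'19 work --- so I can only compare your argument against the standard construction it refers to and against the analogous quotient argument the paper gives for public actions. Your proposal reconstructs that standard argument correctly: the two load-bearing observations are exactly the right ones, namely (i) that propositional pre- and postconditions make executability and the next valuation functions of the current valuation alone, so positions can be folded to pairs (valuation, last action), and (ii) that by the definition of the update product the indistinguishability relation on $\world\event_1\cdots\event_n$ unravels componentwise into $\world\epistemicrelation{\agent}\world'$ and $\event_i\epistemicrelationevents{\agent}\event'_i$ for all $i$, which is precisely what the synchronous perfect-recall lifting of your position-level relation reproduces on the unfolding. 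The one place where your write-up is slightly off is the ``players know their available actions'' loose end: your position-level relation over-relates, so $(V,\event)\epistemicrelationequiv{\agent}(V',\event')$ may hold for pairs that are not the images of any two genuinely indistinguishable histories of $\delgamearena$, and for such pairs H3 gives you nothing and $\setact$ may genuinely differ. The honest resolution is not to repair this at the position level but to observe that the condition is only needed --- and, via the isomorphism and H3, only guaranteed --- on the perfect-recall-lifted relation of the unfolding, which is the only place the equivalence of games consults it; alternatively one can intersect the relation on positions with equality of available action sets without disturbing the lifted relation on reachable histories. With that caveat the proof is sound and matches the spirit of the paper's own finite-folding argument for public actions (Proposition~\ref{theo-regular-public}), which quotients by isomorphism of attached models where you quotient by the coarser valuation-plus-last-action data that propositionality makes sufficient.
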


From the latter result, if the winning condition $\win$ is
given as a formula $\phi\in\LTLK$, then the same winning condition on
$\gamestruct$ yields a multiplayer epistemic game that is equivalent to the original DEL
game $\delgame$ in terms of existence of distributed winning strategies.
And because such games can be decided in the case of hierarchical
information, we obtain our result.

More precisely, we say that a DEL game $(\epsmodel,\setworlds_\init,\eventmodel,\win)$ presents 
\emph{hierarchical information} if  the set of agents
$\agentsexists$ can be totally ordered ($\agent_1 < \ldots <\agent_n$) so that  $\epistemicrelationequiv{\agent_i}\subseteq\;
\epistemicrelationequiv{\agent_{i+1}}$ and $\epistemicrelationequivevents{\agent_i}\subseteq\;
\epistemicrelationequivevents{\agent_{i+1}}$, for each $1\le
i<n$.

\begin{theorem}[\cite{DBLP:conf/mfcs/Puchala10,DBLP:conf/kr/MaubertM18}]
  \label{theo-unif-strat-KR}
In  multiplayer epistemic
  games 
  with hierarchical information and epistemic temporal objectives, 
distributed strategy synthesis  
  is
  decidable \bmchanged{for the uninformed semantics}.
\end{theorem}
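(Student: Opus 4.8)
The plan is to decide the problem in two phases: first rewrite the \LTLK objective into an equivalent purely temporal ($\omega$-regular) one, using the uninformed semantics in an essential way, and then appeal to the known decidability of imperfect-information games with hierarchical information and $\omega$-regular objectives. One preliminary observation simplifies matters: whether a distributed strategy of $\agentsexists$ is winning depends only on its set of outcomes, and outcomes are defined with no reference to the information available to $\agentsforall$; hence we may treat the whole team $\agentsforall$ as a single omniscient, perfect-recall opponent, so that in effect only $\agentsexists$ has imperfect information.

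\emph{First phase.} For a fixed finite arena, $\win_\phi$ is an $\omega$-regular set of plays. This is the step where the uninformed semantics (Remark~\ref{rem-uninformed}) is indispensable: the knowledge operators of $\phi$ range over all plays of the arena independently of any strategy, so the truth of a subformula $K_a\psi$ along a play is a property of that play and of the fixed arena only, and can therefore be pre-compiled into the labelling. We process the knowledge subformulas of $\phi$ from innermost outwards. For a subformula $K_a\psi$ in which $\psi$ is already purely temporal over the vocabulary built so far, we introduce a fresh atomic proposition and take the product of the (already augmented) arena with a subset automaton maintaining the $a$-belief, i.e.\ the set of positions reachable by histories $\epistemicrelationequiv{a}$-equivalent to the current one, updated along $a$'s observations; in the product the new proposition is declared true exactly when this belief is contained in the set of positions from which every play satisfies $\psi$ at moment~$0$ — a set computable since it amounts to emptiness checks in the product with a deterministic parity automaton for $\lnot\psi$. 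The product carries the turn function, available actions and indistinguishability relations inherited from the arena (in particular it still presents hierarchical information), and, being deterministic, it is equivalent to the original game with respect to existence of a distributed winning strategy for $\agentsexists$. Once all knowledge subformulas have been treated, $\phi$ has become an \LTL formula $\phi'$; composing once more with a deterministic parity automaton for $\phi'$ yields a finite arena with a parity objective, still hierarchical and equivalent to the original DEL game.

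\emph{Second phase.} We are left with a finite arena in which $\agentsexists = \{\agent_1 < \cdots < \agent_n\}$ has hierarchical imperfect information, $\agentsforall$ is omniscient, and the objective is a parity condition; decidability for such games is exactly the result of~\cite{DBLP:conf/mfcs/Puchala10,DBLP:conf/kr/MaubertM18}, which extends the pipeline-synthesis technique of Peterson--Reif and Pnueli--Rosner (see also Kupferman--Vardi). Its engine is an iterated powerset construction: one performs a knowledge-based subset construction for the least-informed agent $\agent_n$, on which $\agent_n$'s strategy becomes an ordinary perfect-information strategy; the hierarchy hypothesis $\epistemicrelationequiv{\agent_i}\subseteq\epistemicrelationequiv{\agent_{i+1}}$ guarantees that every $\agent_i$ with $i<n$ can reconstruct $\agent_n$'s belief from its own observation, so the observations of $\agent_1,\ldots,\agent_{n-1}$ lift coherently to the belief arena, which again presents hierarchical information but with one fewer member of $\agentsexists$, while the parity objective is pulled back along the projection. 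Iterating eliminates all imperfect information and leaves a two-player perfect-information parity game, which is decidable.

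\emph{Main obstacle.} The delicate part is the second phase: verifying that each powerset step genuinely preserves the hierarchy and that the pulled-back, re-determinised objective remains a parity condition — this is the technical heart of~\cite{DBLP:conf/mfcs/Puchala10,DBLP:conf/kr/MaubertM18}, and the hypothesis $\epistemicrelationequiv{\agent_i}\subseteq\epistemicrelationequiv{\agent_{i+1}}$ is precisely what makes it go through; without comparability the problem is undecidable. The first phase is, by contrast, routine once one notices that under the uninformed semantics the knowledge operators are strategy-independent and thus recognisable by a finite automaton reading the play; with the informed semantics this rewriting breaks down, which is why the statement is restricted to the uninformed reading. The overall procedure is non-elementary — a tower of exponentials whose height grows with the depth of the hierarchy (and with the nesting of knowledge operators) — matching the Pnueli--Rosner lower bound, but this is immaterial for decidability.
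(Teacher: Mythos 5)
The paper does not prove Theorem~\ref{theo-unif-strat-KR} at all: it is imported verbatim from the cited works, so there is no in-paper argument to match yours against. Judged on its own terms, your reconstruction is sound in its main lines and correctly isolates the two ingredients that make the theorem true: under the uninformed semantics (Remark~\ref{rem-uninformed}) each knowledge operator is strategy-independent and hence a regular property of the play prefix, computable by tracking the relevant agent's information set (including information sets over the already-augmented positions for nested operators), which lets you compile \LTLK down to an $\omega$-regular condition on a finitely enlarged arena; and hierarchical information is exactly what makes the residual imperfect-information $\omega$-regular game decidable. Your packaging differs from the cited proofs, which do not proceed in two phases on the arena but run a single automata-theoretic construction (alternating tree automata over strategy trees, with a ``narrowing'' operation applied along the observation hierarchy) that treats knowledge operators and uniformity constraints uniformly; your decomposition is arguably more transparent but pays for it in the second phase, where you have to re-establish that the quotiented objective and the lifted indistinguishability relations behave. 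Two caveats there: the claim that a single knowledge subset construction per agent, eliminating the least-informed member of $\agentsexists$ first, ``turns her strategy into a perfect-information one'' is the loosest step --- belief-based strategies are not a priori interchangeable with observation-based ones, and for non-observable parity objectives the iterated construction (and the order of elimination along the hierarchy) is precisely the delicate content of the cited works, to which you rightly defer; and note that the knowledge operators of $\phi$ may mention agents of $\agentsforall$, so you cannot literally merge $\agentsforall$ into one omniscient adversary before phase one --- their indistinguishability relations are still needed to compile those operators, and can only be discarded afterwards. Neither point undermines the argument; both are consistent with how the literature you are reconstructing actually handles them.
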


Proposition~\ref{prop-regular-imp} together with
Theorem~\ref{theo-unif-strat-KR} imply that:

\begin{theorem}
  \label{theo-propositional}
Distributed strategy synthesis for \LTLK objectives with propositional
  actions and hierarchical information is decidable.
\end{theorem}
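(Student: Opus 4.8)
The plan is to obtain Theorem~\ref{theo-propositional} by chaining Proposition~\ref{prop-regular-imp} with Theorem~\ref{theo-unif-strat-KR}. This reduces the whole task to two sanity checks: that passing to the finite arena preserves hierarchical information, and that an $\LTLK$ winning condition is faithfully transported along the game equivalence of Proposition~\ref{prop-regular-imp}.

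First I would fix the input: a DEL game $\delgame=(\kripkemodel,\eventmodel,\setworldsinit,\phi)$ with $\phi\in\LTLK$, where $\eventmodel$ is propositional and the presentation has hierarchical information, witnessed by a total order $\agent_1<\dots<\agent_n$ of $\agentsexists$ along which the indistinguishability relations of $\kripkemodel$ and of $\eventmodel$ are nested. The key observation is that the update product of Definition~\ref{definition:product} is monotone in the indistinguishability relations, since $(\world,\event)\epistemicrelation{\agent}'(\world',\event')$ holds exactly when $\world\epistemicrelation{\agent}\world'$ and $\event\epistemicrelationevents{\agent}\event'$; hence the nesting is preserved in every $\ETLforest[n]{\kripkemodel}{\eventmodel}$, and therefore $\epistemicrelationequiv{\agent_1}\subseteq\dots\subseteq\epistemicrelationequiv{\agent_n}$ in the induced infinite arena $\delgamearena$. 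So $\delgamearena$ is itself a hierarchical game arena.

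Next I would apply Proposition~\ref{prop-regular-imp} to obtain a finite game arena $\gamestruct$ equivalent to $\delgamearena$, and observe that the construction preserves hierarchical information: $\gamestruct$ is obtained by identifying histories that induce isomorphic pointed epistemic models, so its indistinguishability relations are images of those of $\delgamearena$ under a common quotient map, and quotienting by a common equivalence preserves the inclusion order $\epistemicrelationequiv{\agent_i}\subseteq\epistemicrelationequiv{\agent_{i+1}}$. Thus $\gamestruct$ is a \emph{finite} multiplayer epistemic game arena with hierarchical information. It then remains to transport the objective. Since $\gamestruct$ and $\delgamearena$ are equivalent, their unfoldings are isomorphic, and because the truth of an $\LTLK$ formula along a play depends only on valuations and the synchronous perfect-recall indistinguishability relations --- i.e.\ only on the unfolding up to isomorphism --- the canonical bijection between plays of $\gamestruct$ and of $\delgamearena$ maps $\win_\phi$ read over $\gamestruct$ onto $\win_\phi$ read over $\delgamearena$. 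Combined with the fact that equivalent games agree on the existence of distributed winning strategies for $\agentsexists$, this gives: $\agentsexists$ wins $(\delgamearena,\win_\phi)$ if and only if $\agentsexists$ wins $(\gamestruct,\win_\phi)$. The latter is a finite multiplayer epistemic game with hierarchical information and an epistemic temporal objective, hence decidable by Theorem~\ref{theo-unif-strat-KR}; and since $\gamestruct$ is effectively constructible from $\delgame$ by Proposition~\ref{prop-regular-imp}, the overall procedure is effective.

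The main obstacle is not the reduction itself but making the two invariance claims airtight: that the finite abstraction underlying Proposition~\ref{prop-regular-imp} keeps the nesting of the indistinguishability relations intact (so the hierarchy survives the abstraction), and that $\LTLK$-truth --- whose knowledge modalities quantify over \emph{all} indistinguishable histories of the arena, not merely those compatible with the strategies being evaluated (Remark~\ref{rem-uninformed}) --- is genuinely preserved by the game equivalence. Both follow by unwinding the definitions, but they are the points one should spell out with care.
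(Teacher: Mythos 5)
Your proposal is correct and follows essentially the same route as the paper, whose proof is literally the one-line observation that Proposition~\ref{prop-regular-imp} combined with Theorem~\ref{theo-unif-strat-KR} yields the result; you merely spell out the two invariance checks (preservation of the hierarchy under the product and the quotient, and transport of the $\LTLK$ objective along game equivalence) that the paper leaves implicit. The only detail the paper addresses that you omit is the reduction from multiple initial positions to a single one (Remark~\ref{rem-multiple-init}), needed because Theorem~\ref{theo-unif-strat-KR} is stated for a unique initial position, but this is routine.
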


\begin{remark}
  \label{rem-multiple-init}
  The results
  in~\cite{DBLP:conf/mfcs/Puchala10,DBLP:conf/kr/MaubertM18} are
  established for games with a unique initial position, \ie when
  $\setpos_\init$ is a singleton $\{\posinit\}$. However it is
  easy to see that distributed synthesis with
  multiple initial positions $\setpos_\init$ can be reduced to the
  case of a unique initial position: one only needs to add a fresh
   position $\pos_\init$ that is used as initial position, from which
   one can attain all positions in $\setpos_\init$, and only these. It does not matter who
   $\pos_\init$ belongs to or how the agents observe it.
\end{remark}


\section{DEL games with public actions}
\label{sec:publicactions}
In this section, we show that when all actions are public, the
distributed strategy synthesis problem 
is decidable for \LTLK winning conditions.  Towards this end, we first prove a result similar to
Proposition~\ref{prop-regular-imp}: we show that given a DEL game
presentation $(\kripkemodel,\eventmodel,\setworldsinit)$ where all actions
$\event\in\eventmodel$ are public, the infinite game arena
$\delgamearena$ is regular and can be folded back into a finite game
arena. This allows us to reduce the distributed strategy synthesis
problem to a distributed synthesis problem on explicit game arenas, for which a solution is known in the case of public actions
and \LTLK objectives~\cite{DBLP:conf/concur/MeydenW05}.

Note that the decidability result for reachability DEL games with
public actions  in~\cite{DBLP:conf/ijcai/MaubertPS19} does not rely on
this kind of construction, but instead is proved
by  providing a direct alternating algorithm. There is a problem in the
way this algorithm forces strategies to be uniform.
In the case of public actions and unique initial world considered
there it can be easily corrected, as there
is in fact no need to check for uniformity of strategies (see
Remark~\ref{rem-unique-init}). But for our more general setting with
multiple initial worlds, this approach was not sound.


\begin{proposition}
  \label{theo-regular-public}
    Given a DEL game presentation
  $(\kripkemodel,\eventmodel,\setworldsinit)$ where
all actions in  $\eventmodel$ are public, one can
compute a finite game arena $\gamestruct$  
equivalent to
 $\delgamearena$ such that
 $|\gamestruct|\le m(2^p+1)^{m}$, with $p$ the
 number of atomic propositions in $(\kripkemodel,\eventmodel)$ and $m$  the number of
 worlds in $\kripkemodel$.  
\end{proposition}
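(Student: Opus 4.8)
The key observation is that when all actions in $\eventmodel$ are public, the epistemic models $\ETLforest[n]{\kripkemodel}{\eventmodel}$ reachable from the initial model can, up to isomorphism, only range over a finite set. Indeed, a public action $\event$ leaves each agent's indistinguishability relation on events as the identity, so in the update product $\kripkemodel \otimes \eventmodel$ restricted to worlds $(\world,\event)$ for a single fixed $\event$, two worlds $(\world,\event)$ and $(\world',\event)$ are $\epistemicrelation\agent'$-related iff $\world \epistemicrelation\agent \world'$. Thus the subset $\{(\world,\event) : \kripkemodel,\world\models\pre(\event)\}$ is, as an epistemic model, a \emph{submodel} of $\kripkemodel$ with valuations updated pointwise according to $\postval(\cdot,\event)$. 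Since the updated valuation of a world depends only on its old valuation (postconditions are propositional, hence their truth depends only on the current valuation, not on the epistemic structure), the only data determining the new model up to isomorphism is: which worlds of $\kripkemodel$ survive, and what their new valuations are. So I would first prove by induction on $n$ that every connected component of $\ETLforest[n]{\kripkemodel}{\eventmodel}$ is isomorphic to a structure of the form $(\kripkemodel\!\restriction\! W, \nu)$ where $W \subseteq \setworlds$ and $\nu : W \to 2^{\AP_0}$, with $\AP_0$ the (finite) set of $p$ atomic propositions occurring in $(\kripkemodel,\eventmodel)$ — all other propositions keep a fixed value forever and can be ignored. There are at most $m$ worlds and each gets one of $2^p$ valuations (or is absent), so there are at most $(2^p+1)^m$ such structures, and at most $m$ choices of pointed world in each, giving the bound $m(2^p+1)^m$ on $|\gamestruct|$.

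Next I would build the finite arena $\gamestruct$ explicitly. Its positions are pairs $(\kappa, \world)$ where $\kappa$ is one of the finitely many isomorphism types above and $\world$ a world of $\kappa$ (formally, representatives chosen once and for all); initial positions are the pointed models $(\kripkemodel, \world)$ for $\world \in \setworldsinit$; actions are the events of $\eventmodel$; the transition on $(\kappa,\world)$ by $\event$ is defined iff $\kappa,\world\models\pre(\event)$, leading to $(\kappa', \world')$ where $\kappa'$ is the isomorphism type of the $\event$-component of $\kappa\otimes\eventmodel$ and $\world'$ the image of $(\world,\event)$; the turn and valuation are read off from $\kappa,\world$ directly; and $(\kappa,\world)\epistemicrelationequiv\agent(\lambda,\worldb)$ iff there is an isomorphism between $\kappa$ and $\lambda$, $\ldots$ — actually one must be careful here: indistinguishability in the \emph{unfolded} DEL arena tracks the whole history of observations, and since actions are public, two histories $\world\event_1\cdots\event_n$ and $\world'\event'_1\cdots\event'_n$ are $\epistemicrelationequiv\agent$-related iff $\event_i = \event'_i$ for all $i$ (publicness forces the action sequences to coincide) and $\world \epistemicrelation\agent \world'$ in $\kripkemodel$. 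So I would instead define indistinguishability on $\gamestruct$ directly: $(\kappa,\world)\epistemicrelationequiv\agent(\kappa,\worldb)$ iff $\world \epistemicrelation\agent \worldb$ within the same type $\kappa$ (same position in the game tree reached by the same public action sequence), and no two positions reached by different action sequences are ever related. The crux is then to verify that the unfolding of $\gamestruct$ is isomorphic to the unfolding of $\delgamearena$, i.e. that the natural map sending a history $\world\event_1\cdots\event_n$ of $\delgamearena$ to the corresponding history of $\gamestruct$ is a bijection preserving transitions, turns, valuations, and all the $\epistemicrelationequiv\agent$ relations on histories. Transitions, turns and valuations are immediate from the construction; the indistinguishability part is exactly the publicness computation sketched above, lifted to histories via synchronous perfect recall.

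The main obstacle is getting the indistinguishability relation of $\gamestruct$ right so that its \emph{perfect-recall lifting} matches that of $\delgamearena$. A naive quotient of $\delgamearena$ by isomorphism of pointed models would collapse too much: two histories with the same current pointed model but incomparable observation pasts must stay distinguishable for perfect-recall agents, so $\gamestruct$ cannot literally be "the set of isomorphism types of reachable pointed models" with the induced relation — one needs the relation on $\gamestruct$ to be \emph{one-step} indistinguishability (agree on the last action and on the $\kripkemodel$-world), and then rely on the general fact (used already for Definition~\ref{sec-unfolding}) that lifting a one-step relation by synchronous perfect recall and then unfolding gives the right thing. Concretely I expect to spend most of the proof checking: (i) publicness implies $\world\event_1\cdots\event_n \epistemicrelationequiv\agent \world'\event'_1\cdots\event'_n$ in $\delgamearena$ forces $\event_i=\event'_i$ for all $i$; (ii) under that, the relation reduces to $\world\epistemicrelation\agent\world'$ via the inductively-constructed isomorphisms; (iii) these isomorphisms are canonical enough that the map to $\gamestruct$ is well defined. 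Once the unfoldings are shown isomorphic, equivalence of the two game structures (in the sense defined after Definition~\ref{sec-unfolding}) is immediate, and the size bound follows from the counting argument. Hypotheses H1--H3 are used only to ensure $\gamestruct$ is a legitimate game arena (single starting player, turn stays known, available actions known) and are inherited directly from $(\kripkemodel,\eventmodel)$.
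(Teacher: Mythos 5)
Your proposal is correct and follows essentially the same route as the paper: public actions can only delete worlds and rewrite valuations, so every attached (connected) epistemic model is, up to isomorphism, a submodel of $\kripkemodel$ with an updated labelling, giving the $(2^p+1)^m$ isomorphism types and the $m(2^p+1)^m$ bound, with the finite arena obtained as the quotient of $\delgamearena$ by isomorphism of attached pointed models. Your concern about indistinguishability resolves just as in the paper's construction: since $\epistemicrelationequiv{\agent}$ only ever relates positions within a single connected component, the quotient relation inherited from $\kripkemodel$ never crosses distinct types, and combined with the synchronous perfect-recall lifting this already yields the right relation on histories, so there is no need to additionally record the last action in the positions of $\gamestruct$.
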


\begin{proof}
For every position $\world\event_1\ldots\event_n$
in $\delgamearena$ we define its \emph{attached epistemic model} $\kripkemodel_{\world\event_1\ldots\event_n}$ as the
connected component of $\ETLforest[n]{\kripkemodel}{\eventmodel}$ that
contains $\world\event_1\ldots\event_n$.
Since all actions in $\eventmodel$ are public,
 for all positions $\world\event_1\ldots\event_n$ and
$\world\event_1\ldots\event_n\event_{n+1}$ in $\delgamearena$
we have that $\kripkemodel_{\world\event_1\ldots\event_{n+1}}$ is no
bigger than $\kripkemodel_{\world\event_1\ldots\event_n}$:
indeed,
the application of a public action can only remove worlds from
$\kripkemodel_{\world\event_1\ldots\event_n}$ (those that do not satisfy the precondition) and change
the valuations of the remaining worlds.
As a result
there is only a finite number
 of different positions
$\world\event_1\ldots\event_n$
in $\delgamearena$, up to isomorphism of their attached models. 
We write  $\equivalence$ the  equivalence relation on positions of
$\delgamearena$ defined by letting two positions
be equivalent if their attached models are isomorphic, and we let
$\eqclass{\world\event_1\ldots\event_n}$ be the equivalence class of
position $\world\event_1\ldots\event_n$ for this relation.

Let us write $\delgamearena =(\setpos,\setposinit,\setact,\trans,
\turnfunc,\{\relequiva\}_{\agent\in\agtset}, \valuationfunction)$.
The finite game arena $\gamestruct$ is the quotient of  $\delgamearena$ with $\equivalence$. More precisely, 
$\quotient{\delgamearena} =(\setpos',\setposinit',\setact',\trans',
\turnfunc',\{\relequiva'\}_{\agent\in\agtset}, \valuationfunction')$, where:
\begin{itemize}
\item $\setpos'=\{\eqclass{\world\event_1\ldots\event_n}
  \mid \world\event_1\ldots\event_n\in\setpos\}$,
\item $\setposinit'=\{\eqclass{\world}\mid \world\in\setposinit\}$,
\item $\setact'=\setact$,
\item $\trans'(\eqclass{\pos},\event)=
  \begin{cases}
    \eqclass{\trans(\pos,\event)} &\mbox{if }\trans(\pos,\event)\mbox{
      is defined}\\
    \mbox{undefined} &\mbox{otherwise,}
  \end{cases}
$ 
\item $\eqclass{\pos} \relequiva' \eqclass{\pos'}$ if $\pos\relequiva
  \pos'$, and
\item $\valuationfunction'(\eqclass{\pos})=\valuationfunction(\pos)$.
\end{itemize}

To see that $\trans'$ is well defined, observe that if
$\pos\equivalence\pos'$ then $\trans(\pos,\event)$ is defined if,
and only if, so is $\trans(\pos',\event)$, and in this case $\trans(\pos,\event)\equivalence\trans(\pos',\event)$.
The fact that $\relequiva'$ and $\valuationfunction'$ are well defined
follows directly from isomorphism of attached models.

To construct $\quotient{\delgamearena}$, one can enumerate all
possible attached models $\kripkemodelb$ (modulo isomorphism) as
follows: for each world $\world$ in the original model $\kripkemodel$,
decide first whether there is some world of
the form $\world\event_1\dots\event_n$  in
$\kripkemodelb$; if there is, there is only one, because all actions
are public, and thus any position of the form
$\world\event'_1\ldots\event'_n$ (with for some $i$, $\event'_i \neq \event_i$) is not related to
$\world\event_1\dots\event_n$ and thus does not appear in $\kripkemodelb$. Then, one chooses the valuation over the
atomic propositions involved in the problem for each world in $\kripkemodelb$. Indistinguishability
relations are inherited from $\kripkemodel$:
$\world\event_1\dots\event_n\relequiva \world'\event_1\dots\event_n$
if, and only if, $\world\relequiva\world'$. The number of such
different attached models is  bounded by
$\sum_{k=1}^{m}\binom{m}{k}2^{pk}=(2^p+1)^m$, and each one
  has at most $m$ worlds. We thus have at most $m(2^p+1)^m$ positions
in $\setpos'$. \bmchanged{It remains to build the function $\trans'$ 
as described in the definition of $\quotient{\delgamearena}$: to
determine $\trans'(\eqclass{\pos},\event)$, compute the product of the
representant of $\eqclass{\pos}$ in $\quotient{\delgamearena}$ with
$(\eventmodel,\event)$, and identify the only position of
$\quotient{\delgamearena}$ that is isomorphic to the result.
Testing for isomorphism can be done in exponential time, and there is
an exponential number of positions to test. Finally, the whole construction can be done in exponential time.}
\end{proof}


Proposition~\ref{theo-regular-public} ensures that from a DEL game presentation
$(\kripkemodel,\eventmodel)$ with public actions we can construct an
equivalent finite game arena of exponential size. Moreover, in this game arena, all actions are public in the sense
of~\cite{DBLP:conf/atal/BelardinelliLMR17}. In this latter work, 
model checking \ATLs with epistemic operators (\ATLsK) on game arenas with
public actions is proved in \TWOEXPTIME. More precisely, the proposed procedure
takes time doubly exponential in the size of the formula, but only
exponential time in the size of the game structure. Combined with our exponential
construction from Theorem~\ref{theo-regular-public}, we obtain a
procedure to solve our distributed strategy synthesis problem for public
actions in doubly exponential time, both in the size of the DEL game
presentation and in the size the \LTLK winning condition $\phi$.

To make our argument more precise, we briefly recall the syntax and semantics of \ATLsK.
The syntax of \ATLsK is given by the following grammar:
\begin{align*}
  \phi & ::= p \mid \lnot \phi \mid \phi \vee \phi \mid \K \phi \mid \Estrat\psi \\
  \psi & ::= \phi \mid \lnot \psi \mid \psi \vee \psi \mid \next\psi \mid \psi \until \psi 
\end{align*}
  
where $p\in\AP$ and $\agent\in\agtset$.
Formulas of type $\phi$ are called \emph{history formulas}, while
those of type $\psi$ are called \emph{path formulas}. 
Note that, in addition, the authors of \cite{DBLP:conf/atal/BelardinelliLMR17} consider epistemic operators for common and distributed knowledge. We omit them from the syntax as we do not consider such operators in this work.

The semantics of \ATLsK is defined
in~\cite{DBLP:conf/atal/BelardinelliLMR17} on concurrent-game
structures. We instead define it on our turn-based game structures,
which can be seen as a particular case.
Let us first recall the notion of public actions for game arenas
considered in~\cite{DBLP:conf/atal/BelardinelliLMR17}:
\begin{definition}
  \label{def-public-actions-arenas}
A game
arena $\gamestruct=(\setpos,\setposinit,\setact,\trans,\turnfunc,(\epistemicrelationequiv{\agent})_{\agent\in
  \Ag},\valuationfunction)$ has \emph{only public actions} if, for
all $\pos,\pos'\in\setpos$ and $\act,\act'\in\setact$ such that
$\act\neq\act'$, we have
$\trans(\pos,\act)\not\epistemicrelationequiv{\agent}\trans(\pos',\act')$. 
\end{definition}

The semantics of \ATLsK formulas is defined with respect to a game
arena $\gamestruct$ together with a history $\hist$ in case of a
history formula, or a play $\play$ and a point in time $i\in\N$ in
case of a path formula.
\begin{align*}
  \gamestruct,\hist &\models p &&\ifdef p\in \valworlds(\hist)\\
  \gamestruct,\hist &\models \lnot \phi &&\ifdef\gamestruct,\play,i \not \models \phi\\
  \gamestruct,\hist &\models \phi_1 \vee \phi_2
                                 &&\ifdef\gamestruct,\play,i \models
                                    \phi_1 \; \mbox{ or }\;\gamestruct,\play,i \models \phi_2\\
  \gamestruct,\hist &\models \K\phi &&\ifdef \forall
                                            \hist'\in\sethistories \text{ s.t. }
                                            \hist' \relequiva
                                         \hist,\;\gamestruct,\hist'
                                       \models \phi \\
    \gamestruct,\hist &\models  \Estrat \psi &&\ifdef \text{there
                                                exists }\disstrategy[\coal]
                                                \text{
                                                s.t. }\forall\play\in\outcome(\hist,\disstrategy[\coal]), \\
& &&  \;\;\;\;\;\; \gamestruct,\play,|\hist|-1 \models \psi\\
  \gamestruct,\play,i &\models \phi &&\ifdef \gamestruct,\play_{\leq
                                       i}\models \phi\\
  \gamestruct,\play,i &\models \lnot \phi &&\ifdef\gamestruct,\play,i \not \models \phi\\
  \gamestruct,\play,i &\models \phi_1 \vee \phi_2
                                 &&\ifdef\gamestruct,\play,i \models
                                    \phi_1 \; \mbox{ or }\;\gamestruct,\play,i \models \phi_2\\
  \gamestruct,\play,i &\models \next\phi &&\ifdef\gamestruct,\play,i+1 \models \phi\\
  \gamestruct,\play,i &\models \phi_1 \until \phi_2
                                 &&\ifdef\exists i'\geq i
                                    \;\text{ s.t.
                                    }\;\gamestruct,\play,i'
                                           \models \phi_2 \;\text{
                                    and }\\
                                          & && \quad\;\;\forall i'' \;\text{ s.t. }\; i\leq i''<i', \gamestruct,\play,i'' \models \phi_1
\end{align*}

In the above definition, $\outcome(\hist,\disstrategy[\coal])$ is the set of
plays that extend $\hist$ by following $\disstrategy[\coal]$, which
corresponds to the objective semantics discussed in
Section~\ref{sec-particular-cases}.  Thus, it is easy to see that for
an \LTLK formula $\psi$ and a game arena $\gamestruct$ with a
singleton set of initial positions $\setposinit=\{\posinit\}$, it
holds that $\agentsexists$ wins $(\gamestruct,\phi)$ if, and only if,
$\gamestruct,\posinit\models\Estrat[\agentsexists]\psi$  (technically,
$\psi$ should be modified by replacing each occurrence of knowledge
modality $\K$ by $\K\Estrat[\emptyset]$ in order to have a history
formula).

As explained in~\cite{DBLP:conf/atal/BelardinelliLMR17}, their procedure
can be adapted to the subjective semantics, \ie when the distributed
strategy $\disstrategy[\coal]$ is required to be winning from all positions
that are equivalent to $\posinit$ for some agent in
$\agentsexists$. It is not hard to see that this adaptation
would also work for any set $\setposinit$ of initial positions, and
that it does not change the complexity of the procedure.

 \begin{theorem}
\label{theo-publicactions}
For public actions, distributed strategy synthesis for \LTLK objectives is \TWOEXPTIME-complete.
 \end{theorem}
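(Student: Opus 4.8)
The plan is to establish the two directions of the \TWOEXPTIME\ bound separately: membership via the reduction machinery already developed, and hardness by a reduction from \LTL\ reactive synthesis.

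For membership in \TWOEXPTIME, I would chain together the constructions already in place. Starting from a DEL game presentation $(\kripkemodel,\eventmodel,\setworldsinit)$ with public actions, Proposition~\ref{theo-regular-public} yields in exponential time a finite game arena $\gamestruct$ of size at most $m(2^p+1)^m$ that is equivalent to $\delgamearena$, and moreover has only public actions in the sense of Definition~\ref{def-public-actions-arenas}. By the equivalence of a game and its unfolding, and by Definition~\ref{def-del-games}/the discussion around it, team $\agentsexists$ wins $(\delgamearena,\phi)$ iff $\agentsexists$ wins $(\gamestruct,\phi)$. Next I would translate the \LTLK\ objective $\phi$ into the \ATLsK\ history formula $\Estrat[\agentsexists]\psi$, where $\psi$ is obtained from $\phi$ by rewriting every knowledge modality $\K$ as $\K\Estrat[\emptyset]$ so that it becomes a genuine history formula of polynomial size; as noted in the excerpt, $\agentsexists$ wins $(\gamestruct,\phi)$ iff $\gamestruct,\posinit\models\Estrat[\agentsexists]\psi$ (using Remark~\ref{rem-multiple-init} or the observation that the model-checking procedure extends to arbitrary $\setposinit$). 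Finally I invoke the result of~\cite{DBLP:conf/atal/BelardinelliLMR17}: model checking \ATLsK\ on arenas with public actions is decidable in time doubly exponential in the formula and only exponential in the arena. Composing an exponential-size arena with a singly-exponential-in-the-formula, exponential-in-the-arena procedure gives an overall doubly-exponential running time in $|\kripkemodel|+|\eventmodel|+|\phi|$, hence membership in \TWOEXPTIME.

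For \TWOEXPTIME-hardness, I would reduce from the \LTL\ realizability/synthesis problem, which is \TWOEXPTIME-hard already for a single agent against the environment~\cite{pnueli1989synthesisshort,PR90}. Given an \LTL\ specification over input and output propositions, I build a DEL game presentation with a single world and public actions only: the sole controllable agent in $\agentsexists$ picks output moves and the environment agent in $\agentsforall$ picks input moves, all actions being public (trivial indistinguishability, which is consistent with hypotheses H1--H3 and with the single-world requirement). Since there is only one world and all actions are public, every induced epistemic model stays a single point, so the knowledge operators are vacuous and the DEL game collapses to an ordinary \LTL\ two-player game whose arena is essentially a complete bipartite gadget on the propositions; $\agentsexists$ wins it for $\phi$ iff the \LTL\ specification is realizable. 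This gives the lower bound, which together with the upper bound yields \TWOEXPTIME-completeness.

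The main obstacle is the membership direction, specifically making the interface with~\cite{DBLP:conf/atal/BelardinelliLMR17} airtight: one must check that their public-actions notion matches ours on the quotient arena (which Proposition~\ref{theo-regular-public} is designed to guarantee), that the \LTLK-to-\ATLsK\ translation preserves the uninformed semantics of $\K$ under the $\K\Estrat[\emptyset]$ rewriting, and that extending their model-checking algorithm from a single initial position to the set $\setposinit$ (equivalently, adding a fresh initial position as in Remark~\ref{rem-multiple-init}) neither breaks the public-actions property nor worsens the complexity. The hardness direction is comparatively routine, as it is the classical \LTL\ synthesis lower bound transported through a degenerate DEL presentation.
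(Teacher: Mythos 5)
Your proposal follows essentially the same route as the paper: membership by quotienting the infinite arena via Proposition~\ref{theo-regular-public}, verifying the public-actions property of Definition~\ref{def-public-actions-arenas} on the quotient, and invoking the \ATLsK model-checking procedure of~\cite{DBLP:conf/atal/BelardinelliLMR17} (doubly exponential in the formula, exponential in the exponential-size arena), with hardness by reduction from \LTL synthesis. The only cosmetic difference is that the paper establishes the public-actions property of the quotient inside the proof of the theorem rather than as part of the proposition, and it is terser than you are on the lower-bound encoding; the substance is the same.
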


 \begin{proof}
   Let $(\kripkemodel,\eventmodel,\setworldsinit,\phi)$ be a DEL game
   with $\phi\in\LTLK$ and such that all actions 
 are public. By Proposition~\ref{theo-regular-public} we
   can compute in exponential time a finite game arena $\gamestruct=\quotient{\delgamearena}$
   equivalent to $\delgamearena$ with   $|\gamestruct|\le
   m(2^p+1)^{m}$, where $m$ is the number of worlds in $\kripkemodel$
   and $p$ is the number of atomic propositions in
   $(\kripkemodel,\eventmodel)$.

We show that this finite game arena $\gamestruct$ has only
   public actions in the sense of
   Definition~\ref{def-public-actions-arenas}.
   Write $\gamestruct =(\setpos,\setact,\trans,
\turnfunc,\{\relequiva\}_{\agent\in\agtset}, \valuationfunction)$, and
recall that $\setpos=\{\eqclass{\world\event_1\ldots\event_n}
  \mid \world\event_1\ldots\event_n\in\delgamearena\}$. 
  Take two positions $\pos=\eqclass{\world\event_1\ldots\event_i}$ and
  $\pos'=\eqclass{\world'\event'_1\ldots\event'_j}$ in $\setpos$, and
  $\event,\event'\in\setact$ such that $\event\neq\event'$. By
  definition,
  $\trans(\pos,\event)=\eqclass{\world\event_1\ldots\event_i\event}$
  and
  $\trans(\pos',\event')=\eqclass{\world'\event'_1\ldots\event'_j\event'}$. Since
$\event$ and $\event'$ are public actions and  $\event\neq\event'$, necessarily 
  $\world\event_1\ldots\event_i\event\not\epistemicrelation{\agent}\world'\event'_1\ldots\event'_j\event'$,
  entailing 
  $\trans(\pos,\event)\not\epistemicrelation{\agent}\trans(\pos',\event')$.

We can thus use the model-checking procedure
from~\cite{DBLP:conf/atal/BelardinelliLMR17} to evaluate whether
$\gamestruct$ satisfies
$\Estrat[\agentsexists]\phi$. This procedure takes time doubly
exponential in the size of $\phi$ and exponential in
the size of $\gamestruct$, which is itself exponential in the size of
the DEL game presentation $(\kripkemodel,\eventmodel)$, hence the
upper bound. The lower bound is obtained by reduction from
LTL synthesis, which is
\TWOEXPTIME-complete~\cite{pnueli1989synthesisshort}.
 \end{proof}

%

 \begin{remark}
   \label{rem-unique-init}
We point out that
in the context of public actions, the case of a unique initial
position is in
fact equivalent to the perfect-information simplification of the
problem, in which the uniformity requirement for strategies is
dropped. Indeed in this case the uniformity constraint
is trivial to satisfy: assume there is a winning distributed strategy
$(\strategy_\agent)_{\agent\in\agtset}$ where the strategies are not
necessarily uniform.
Take two histories $\hist$ and $\hist'$ that are
equivalent to some agent $\agent\in\agentsexists$. If one of them, say $\hist'$,
does not start in $\world_\init$, then it does not matter how strategies 
$(\strategy_\agent)_{\agent\in\agtset}$ are defined on 
$\hist'$, because they are not required to be winning from worlds
other than $\world_\init$; one can thus change the definition of the
strategies on $\hist'$ to make them uniform. Otherwise, if both start in
$\world_\init$, because actions are public, 
$\hist=\hist'$ so that the strategies are already uniform on these
histories.
\end{remark}



\section{DEL games with public announcements}
\label{sec:publicannouncements}

We now investigate DEL games with  \emph{public
  announcements}, which are public actions with
 no effect besides epistemic ones. We
assume that the winning conditions are restricted to
$\fragmentLTLKUnonesting$, the syntactic fragment of \LTLK objectives
with no next modality ($\next$) and with no temporal operator
($\next$ or $\until$) under the scope of a knowledge modality
($\K$). We also assume that the games are round-robin, \ie the turn
goes from an agent to the next in a circular order, and we assume a
unique initial world.
We show that, in this context, deciding the existence of a
winning strategy for the team $\agentsexists$ in a DEL game
$\delgame=(\kripkemodel,\eventmodel,\setworldsinit,\phi)$ and $\phi \in \fragmentLTLKUnonesting$
is \PSPACE-complete. 
Because reachability goals are definable in
$\fragmentLTLKUnonesting$, this result generalises the
\PSPACE-completeness result established in~\cite{DBLP:conf/ijcai/MaubertPS19}.
 
Formally, a \emph{public announcement} is a public action
$(\eventmodel,\event)$ such that $\post(\event, p) = p$, for each
variable $p$ but variable $\turn$. This is the natural generalisation
of public announcements as defined in
\cite{DBLP:journals/synthese/Plaza07,DitmarschvdHoekKooi}.
 As a consequence on the product update, 
either an announcement is \emph{non-informative} and the updated epistemic model remains the same (modulo variable $\turn$), or it is
\emph{informative} and yields an epistemic model with strictly
less worlds.

\begin{theorem}
  \label{theorem:objpubannoun}
In round-robin DEL games with unique initial world and  public
announcements, distributed strategy synthesis for   \fragmentLTLKUnonesting winning conditions is \PSPACE-complete.
\end{theorem}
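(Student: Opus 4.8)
The plan is to establish a small-horizon property for these games: show that if team $\agentsexists$ has a winning strategy, then it has one that acts on histories of polynomially bounded length, after which the epistemic situation is "frozen" and the rest of the play is determined by purely temporal (non-epistemic) reasoning. This would reduce the problem to a bounded alternating search, giving the \PSPACE\ upper bound; the lower bound should follow from the \PSPACE-hardness already established in~\cite{DBLP:conf/ijcai/MaubertPS19} for reachability goals, which are a special case of $\fragmentLTLKUnonesting$.

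First I would exploit the structure of public announcements noted just before the statement: every announcement is either non-informative (the attached epistemic model is unchanged modulo $\turn$) or informative (it strictly decreases the number of worlds in the attached model). Since the initial model $\kripkemodel$ has, say, $m$ worlds, along any play there can be at most $m$ informative announcements, and they can only happen at rounds where some new worlds get eliminated. Combined with the round-robin turn order, this means that after at most $|\agtset|\cdot m$ rounds (so that every agent has had a chance to make every possible informative announcement) the attached epistemic model stabilises forever, up to the value of $\turn$. The key structural claim I would prove is that a play can be split into a \emph{prefix} of length $O(|\agtset|\cdot m)$ during which the epistemic content evolves, followed by an \emph{epistemically stationary suffix}. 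On the stationary suffix, because $\phi\in\fragmentLTLKUnonesting$ has no temporal operator under a knowledge modality, every knowledge subformula $\K\psi$ (with $\psi$ epistemic) has a fixed truth value in each reachable world, so it behaves like a fresh atomic proposition; what remains of $\phi$ is then essentially an \LTL\ formula over these "knowledge atoms" plus the ordinary atoms.

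The next step is the strategy-synthesis argument proper. I would show that deciding whether $\agentsexists$ wins can be done by an alternating Turing machine that (i) guesses and verifies the distributed strategies' choices along the polynomial-length prefix, branching universally over $\agentsforall$'s choices and the indistinguishable initial situations as dictated by uniformity, while tracking the (exponentially many but polynomially describable) attached epistemic model; and (ii) once the epistemic model is stationary, checks that the residual \LTL\ winning condition holds on the remaining game, which is now a finite-state perfect-information \LTL\ game on a state space of polynomial size in the description (the stationary model plus the position in $\phi$'s automaton). Each such \LTL\ game on a succinctly presented arena is decidable in \PSPACE\ (indeed in polynomial space in the formula and the arena description), and the overall alternating computation uses polynomial space, so by $\mathrm{APSPACE}=\mathrm{EXPSPACE}$... — here I must be careful: a naive bound gives only \EXPSPACE, so the real work is to argue the \emph{attached model need never be represented explicitly}. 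Because all actions are public and the initial world is unique, at each round the attached model is completely determined by which worlds of $\kripkemodel$ are still "alive" (a subset of the $m$ worlds) — the valuation changes are trivial for public announcements except on $\turn$, which is determined by the round number — so the attached model is describable in $O(m)$ bits, and the whole alternating procedure runs in polynomial space. This is the step I expect to be the main obstacle: pinning down exactly how much state the alternating machine must carry through the prefix, and proving that uniformity of $\agentsexists$'s strategies can be enforced with only polynomial bookkeeping despite having to reason about all indistinguishable histories.

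Finally, for the lower bound I would invoke the reduction from~\cite{DBLP:conf/ijcai/MaubertPS19}: reachability epistemic objectives are of the form $\F\phi$ with $\phi$ epistemic, which lies in $\fragmentLTLKUnonesting$ (no next, no temporal operator under $\K$), and they showed \PSPACE-hardness of distributed strategy synthesis already for round-robin public-announcement games with a unique initial world; since our problem strictly generalises theirs, \PSPACE-hardness transfers directly. Matching it with the \PSPACE\ upper bound from the alternating procedure yields \PSPACE-completeness.
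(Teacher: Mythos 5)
Your overall architecture (polynomial horizon, then a polynomial-space search, hardness inherited from reachability) is the same as the paper's, and your observations that there are at most $m$ informative announcements and that attached models are polynomially describable submodels of $\kripkemodel$ are correct. But there is a genuine gap at the central step. You justify the horizon bound by claiming that, since every agent ``has had a chance'' to make every informative announcement within $|\agtset|\cdot m$ rounds, the attached epistemic model stabilises after that many rounds. This is false for arbitrary strategies: a strategy of either team may issue non-informative announcements for arbitrarily long before issuing an informative one, so plays need not stabilise within any a priori bound, and restricting the search to bounded-horizon strategies is not obviously without loss of generality. What is needed --- and what the paper proves --- is a normalisation lemma: from a winning distributed strategy $\strategy$ one extracts an \emph{eager} strategy $\strategy^{\text{eager}}$ that performs each informative announcement eventually prescribed by $\strategy$ as soon as the relevant agent's turn comes around, so that every outcome has the form $s_1^{k_1}\cdots s_n^{k_n}s^\omega$ with each $k_i<|\agtset|$ and $n\le m$. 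Crucially, $\strategy^{\text{eager}}$ is only guaranteed to be winning because its outcomes are \emph{stuttering equivalent} to outcomes of $\strategy$, and \fragmentLTLKUnonesting{} is stuttering-invariant: epistemic subformulas are evaluated locally in each state (no temporal operator under $\K$), hence behave as atoms, and \LTL{} without $\next$ is insensitive to stuttering. Your proposal never uses the absence of $\next$, which is the telltale sign that this step is missing; without it the bounded-horizon reduction is unsound.

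Two smaller points. You flag uniformity bookkeeping as the main expected obstacle, but with a unique initial world and public announcements uniformity is essentially vacuous (this is the paper's Remark~\ref{rem-unique-init}): strategies need only be winning from the unique initial world, and two histories starting there that are indistinguishable to an agent must carry the same public event sequence and hence coincide. Also, once the attached model is stationary there is no residual \LTL{} game left to solve --- with public announcements the suffix is literally $s^\omega$ --- so the algorithm is just a depth-first min-max search to depth $|\agtset|\cdot m$ in the unfolding, followed by polynomial-time path model checking of the \fragmentLTLKUnonesting{} formula on each branch extended by $s^\omega$; storing one branch needs only polynomial space. Your lower-bound argument is correct and matches the paper's.
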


The rest of this section is dedicated to the proof of Theorem~\ref{theorem:objpubannoun}.
\newcommand{\strategyeager}{{\strategy}^{\text{eager}}}
	\newcommand{\histrep}[1]{{look}_\agent({#1})}
        The problem is already \PSPACE-hard for reachability goals~\cite{DBLP:conf/ijcai/MaubertPS19}, therefore it is
        still \PSPACE-hard for \fragmentLTLKUnonesting objectives.
%
%
        Regarding the membership in \PSPACE, the two main ideas are:
        \begin{enumerate}
        \item From an initial epistemic model
          $\kripkemodel = (\setworlds,
          (\epistemicrelation{\agent})_{\agent \in \agtset},
          \valuationfunction)$, there are at most $|\setworlds|$
          informative announcements;
        \item To limit the length of plays, we can shorten, as depicted in Figure~\ref{figure:publicannouncementbypass}, 
          sequences of non-informative announcements: from a
          strategy $\strategy$
          we show how to extract an 
          \emph{eager}  strategy $\strategyeager$ that  performs all informative announcements
          eventually recommended by $\strategy$ as early as
          possible.
          Thus, any sequence of non-informative announcements followed
          by an informative one is of length at most the number $|\agtset|$
          of agents: if an agent wants to perform an informative event
          in the future, she can do so as soon as it is her turn to
          play. This, in a round-robin game, happens in at most
          $|\agtset|$ steps.
        \end{enumerate}
As a result of these two points, we can search for eager strategies via 
          a depth-first search in $\delgamearena$ up to depth
          $|\agtset| \times |\modelM|$.

          We now describe how to extract eager strategies.  In the
          following we call \emph{states} the attached epistemic
          models\footnote{see proof of
            Proposition~\ref{theo-regular-public},
            page~\pageref{theo-regular-public}.} in $\delgamearena$,
          writing them $s, s_1, \dots$. These are mere submodels of
          the initial model $\modelM$, if we ignore variable
          $\turn$. We then write $s^k$ for the sequence with $k$
          consecutive $s$'s (only variable $\turn$ is
          changing). 

          Given a distributed strategy
          $\strategy = (\strategy_\agent)_{\agent\in\agentsexists}$,
          we let the eager distributed strategy
          $\strategyeager=(\strategyeager_\agent)_{\agent\in\agentsexists}$
          be defined by
          $\strategyeager_\agent(\hist) :=
          \strategy_\agent(\histrep{\hist})$ where $\histrep{\hist}$
          is a history called \emph{look ahead}. This
          $\histrep{\hist}$ is, when it exists, a history that follows
          $\strategy$, in which it is $\agent$'s turn to play and
          $\strategy_\agent(\histrep{\hist})$ is informative. 
          Also, $\hist$ is a stuttering-equivalent subsequence
          of $\histrep{\hist}$ where agents  bypass non-informative announcements and perform
          informative ones prescribed by $\strategy$ as soon as possible. There might
          be no such $\histrep{\hist}$ if agents are not eager in
          $\hist$. 
%
      We define $\histrep{\hist}$ 
      by induction:
	\begin{itemize}
		\item $\histrep{\epsilon} := \epsilon$ (base case);
		\item if $\hist=\hist' s$ where $\hist'$ is a
 history and $s$ is a state  with either
 $\hist'=\epsilon$ or $|s|<|\last(\hist')|$ (an informative
 announcement has been made), $\histrep{(\hist' s^k)}=\histrep{\hist'} s^\ell$, such that $\histrep{\hist'} s^\ell$ follows
		$\strategy$ and $\ell$ is
                  \begin{enumerate}
                \item[i.]  if it exists, the smallest integer such
                that $\turn(\histrep{\hist'} s^\ell)=\agent$, and
$\strategy(\histrep{\hist'} s^\ell)$ is informative at $\histrep{\hist'} s^\ell$.
\item[ii.]             otherwise take $\ell = k$.
                  \end{enumerate}
      
	\end{itemize}
	
	\begin{lemma}
		\label{lemma:publicannouncements-form}
Any outcome of $\strategyeager$ is of the form $s_1^{k_1} \dots s_n^{k_n} s^\omega$ where $|s_1|<|s_2|< \dots <|s_n|$ and $k_i < |\agtset|$.
	\end{lemma}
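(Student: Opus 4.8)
The plan is to show that along any outcome of $\strategyeager$, once the agents stop making informative announcements they never make one again, and that between two consecutive informative announcements no more than $|\agtset|-1$ non-informative ones can occur. Together these give exactly the claimed shape $s_1^{k_1}\dots s_n^{k_n}s^\omega$ with strictly decreasing sizes $|s_1|<\dots<|s_n|$ (using the earlier observation that an informative announcement strictly decreases the number of worlds, while a non-informative one leaves the attached model unchanged modulo $\turn$, so the size is non-increasing overall and strictly decreases exactly at informative steps). Since sizes are bounded below by $1$, the number $n$ of distinct states is finite, hence the tail $s^\omega$.

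First I would fix an outcome $\play$ of $\strategyeager$ and analyse a maximal block of consecutive occurrences of the same state $s$, say $\play$ visits $s$ at positions $m, m+1,\dots,m+k-1$ and then a different (necessarily smaller, or else we are in the tail) state at position $m+k$. By the definition of $look$, the prefix of $\play$ up to this block corresponds, via $\histrep{\cdot}$, to a history following $\strategy$; the transition out of the block at step $m+k$ must be triggered by some agent $\agent$ playing an informative announcement, and by construction of $\strategyeager_\agent$ this happens precisely at the \emph{first} time after entering state $s$ that it is $\agent$'s turn and $\strategy$ (through the look-ahead) recommends an informative action. The key point is that in a round-robin game the turn cycles through all of $\agtset$, so each agent gets a turn at least once within any window of $|\agtset|$ consecutive steps. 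Hence if \emph{any} agent is going to make an informative announcement out of the block, the eager construction forces it to happen within the first $|\agtset|$ steps inside $s$, i.e. $k\le|\agtset|-1$ before the informative step, which after the informative step becomes $k_i<|\agtset|$ for that block index $i$ (case i.\ of the definition); and if \emph{no} agent ever makes an informative announcement from $s$ onward, then by case ii.\ every agent keeps stuttering and the outcome is $\dots s^\omega$.

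The main obstacle I expect is making rigorous the correspondence between the block structure of the eager outcome $\play$ and histories of the original strategy $\strategy$ via $look$ — in particular, verifying that $\histrep{\cdot}$ is well defined along $\play$ (the look-ahead history exists whenever an informative step is about to be taken, which is exactly what the case split i./ii.\ guarantees) and that $\play$ is genuinely an outcome of $\strategyeager$ as opposed to merely one of $\strategy$. This amounts to an induction on the blocks: assuming $\play_{\le m}$ has image $\histrep{\play_{\le m}}$ following $\strategy$ and entering state $s$, I unfold the definition $\histrep{(\hist's^{k})}=\histrep{\hist'}s^\ell$ to read off that the only way to leave the block is at the $\ell$ from case i., which is bounded by the round-robin argument, or never, giving case ii.\ and the $\omega$-tail. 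The strict inequalities $|s_1|<\dots<|s_n|$ and finiteness of $n$ then follow immediately from the monotonicity of state size under public announcements established just before the statement of the theorem.
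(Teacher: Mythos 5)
Your proposal is correct and takes essentially the same route as the paper, which justifies this lemma only through the informal discussion preceding it: eagerness forces any pending informative announcement to be made the first time its agent gets the turn, which in a round-robin game happens within $|\agtset|$ steps, while each informative announcement strictly shrinks the attached epistemic model, so the number of blocks is bounded and the play ends in a stuttering tail $s^\omega$. Your block-by-block induction via the look-ahead histories is a faithful, somewhat more detailed elaboration of that same argument, and you correctly read the chain of inequalities on the $|s_i|$ as strictly decreasing model sizes, consistent with the semantics of informative announcements.
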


As informative announcements prescribed by
        $(\strategyeager_\agent)_{\agent\in\agentsexists}$ coincide with the ones prescribed by  $(\strategy_\agent)_{\agent\in\agentsexists}$,
        the outcomes of $\strategyeager$ are \emph{stuttering
          equivalent} to some outcome of $\strategy$.  Recall that two
        paths are stuttering equivalent if  omitting repetitions of
        states in both of them yields the same sequence of states. For instance,
 $s_1 s_2 s_2 s_3$ and $s_1 s_1 s_2 s_3 s_3$ are stuttering equivalent
 (see~\cite{lamport1983good}).
	
	\begin{lemma}
		\label{lemma:publicannouncements-size}
		For any outcome of $\strategyeager$, there exists a stuttering equivalent outcome of $\strategy$.
	\end{lemma}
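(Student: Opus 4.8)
The plan is to lift the look-ahead map $\histrep{\cdot}$, which is only defined on histories, to infinite plays by a limit construction. Fix an outcome $\play^{\text{eager}}$ of $\strategyeager$; by Lemma~\ref{lemma:publicannouncements-form} it has the shape $s_1^{k_1}\dots s_n^{k_n}s^\omega$, so in particular its sequence of states eventually stabilises at $s$ and it contains only finitely many informative (\ie state-changing) announcements. I would then set $\play := \bigcup_{m\in\N}\histrep{\play^{\text{eager}}_{\leq m}}$ and show that $\play$ is an outcome of $\strategy$ that is stuttering equivalent to $\play^{\text{eager}}$, which proves the lemma.

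First I would check that this union is a well-defined infinite play. Well-definedness reduces to the monotonicity statement that $\histrep{\play^{\text{eager}}_{\leq m}}$ is a prefix of $\histrep{\play^{\text{eager}}_{\leq m+1}}$, which I would prove by induction following the inductive definition of $\histrep{\cdot}$: the only delicate case is when $\play^{\text{eager}}_{\leq m}$ and $\play^{\text{eager}}_{\leq m+1}$ lie in the same maximal run of equal states, where the exponent $\ell$ chosen by $\histrep{\cdot}$ is $\min(j,\ell^{\ast})$, with $j$ the current run length and $\ell^{\ast}$ the least run length (if any) at which $\strategy$ would recommend an informative announcement along the look-ahead; since $\min(j,\ell^{\ast})$ is non-decreasing in $j$, the prefix relation is preserved, and the across-blocks case is immediate. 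For infiniteness I would show $|\histrep{\play^{\text{eager}}_{\leq m}}|\to\infty$: on the stabilised tail $s^\omega$ of $\play^{\text{eager}}$ the quantity $\ell^{\ast}$ cannot be finite, for otherwise the eager strategy of the corresponding agent $\agent$, which plays $\strategy_\agent(\histrep{\hist})$ at a history $\hist$, would at one of the infinitely many later $\agent$-positions of $\play^{\text{eager}}$ (these exist since the game is round-robin) recommend that informative announcement, making $\play^{\text{eager}}$ change state and contradicting the absence of further informative announcements on its tail (Lemma~\ref{lemma:publicannouncements-form}); hence on the tail $\histrep{\cdot}$ keeps appending copies of $s$ and the lengths are unbounded.

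Next I would verify that $\play$ follows $\strategy$ and is an outcome. Each $\histrep{\play^{\text{eager}}_{\leq m}}$ follows $\strategy$ by construction, and by the two points above these histories are cofinal among the prefixes of $\play$; hence at every prefix of $\play$ whose last position belongs to $\agentsexists$, the next move of $\play$ is exactly the one prescribed by $\strategy$, so $\play$ follows every $\strategy_\agent$ with $\agent\in\agentsexists$. Since $\play$ starts in the same world as $\play^{\text{eager}}$, namely the unique initial world, we get $\play\in\outcome(\strategy)$. Finally, for stuttering equivalence I would note that $\play^{\text{eager}}_{\leq m}$ and $\histrep{\play^{\text{eager}}_{\leq m}}$ are stuttering equivalent for every $m$ (an easy induction on the definition of $\histrep{\cdot}$: it only inserts or deletes repetitions of states and preserves the order of the state-changing announcements), and since $\histrep{\play^{\text{eager}}_{\leq m}}$ is a prefix of $\play$ for all $m$, passing to the limit shows that $\play$ and $\play^{\text{eager}}$ collapse to the same sequence of states $s_1 s_2\dots s_n$, hence are stuttering equivalent.

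The step I expect to be the main obstacle is the infiniteness argument: it forces one to unfold the definitions of $\histrep{\cdot}$ and $\strategyeager$ on the tail of $\play^{\text{eager}}$ and, crucially, to use the round-robin assumption to produce a later position at which $\agent$ would be obliged to play the informative announcement witnessed by a finite $\ell^{\ast}$, contradicting Lemma~\ref{lemma:publicannouncements-form}. A secondary subtlety is making ``passing to the limit'' precise: two plays are stuttering equivalent exactly when their collapsed sequences of states coincide, and this is determined by their finite prefixes, which is precisely what the chain $(\histrep{\play^{\text{eager}}_{\leq m}})_{m\in\N}$ delivers.
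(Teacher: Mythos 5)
Your proposal is correct and takes essentially the same route as the paper, which in fact disposes of this lemma in a single sentence (the informative announcements prescribed by $\strategyeager$ coincide with those prescribed by $\strategy$, so outcomes differ only by stuttering); your limit-of-look-aheads construction $\play=\bigcup_m \histrep{\play^{\text{eager}}_{\leq m}}$, with the prefix-monotonicity, infiniteness, and ``follows $\strategy$'' checks, is the natural way to make that sketch precise. The only caveat is that the paper's definition of $\histrep{\cdot}$ is informal enough that your monotonicity and infiniteness verifications rest on one particular (reasonable) reading of how the exponent $\ell$ is chosen within a block of repeated states, but under that reading the argument goes through.
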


        We now design a polynomial space algorithm that
        decides whether there exists such an eager strategy
        $\strategyeager_\agent$ by performing a depth-first-search (minmax-like approach)
      in the unfolding of $\delgamearena$ at polynomial depth $|\agtset| \times |\modelM|$.

 	\begin{figure}
 		\newcommand{\epistemicmodelexamplepicturesmallturn}[1]{\scalebox{0.7}{
 				\begin{tabular}{c}		
 					\begin{tikzpicture}[scale=0.6]
 					\tikzstyle{double_edge} = [latex'-latex',double]
 					\node[world, realworldarrowfromleft] (p) at (0,0) {$\world: \{p\}$};
 					\node[world] (notp) at (0,-2) {$\worldb: \emptyset$};
 					\draw[double_edge] (p) edge  node[left] {$\agentb$} (notp);
 					\draw[-latex] (p) edge[loop right,  looseness=2.5] node[] {$\agenta, \agentb$} (p);
 					\draw[-latex] (notp) edge[loop right,  looseness=2.5] node[] {$\agenta, \agentb$} (notp);
 					\end{tikzpicture} \\
 					turn \ensuremath{#1}
 		\end{tabular}}}
		
 		\newcommand{\smallepistemicmodelexamplepicturesmallturn}[1]{\scalebox{0.7}{
 				\begin{tabular}{c}		
 					\begin{tikzpicture}[scale=0.6]
 					\tikzstyle{double_edge} = [latex'-latex',double]
 					\node[world, realworldarrowfromtop] (p) at (0,0) {$\world: \{p\}$};
 					\draw[-latex] (p) edge[loop right,  looseness=2.5] node[] {$\agenta, \agentb$} (p);
 					\end{tikzpicture} \\
 					turn \ensuremath{#1}
 		\end{tabular}}}
	
 	\newcommand{\backgroundwithstrategy}{\draw[draw=none,fill=yellow!20!white] (-1, 1) rectangle (7, -1);}

 		\tikzstyle{announcement} = [->, line width=0.5mm]
 		\newcommand{\announce}[1]{\text{\tiny announce \ensuremath{#1}}}
 		\begin{center}
 			\begin{tikzpicture}
 			\node[] (0) at (0, 0) {\epistemicmodelexamplepicturesmallturn a};
 			\node[] (1) at (2, 0) {\epistemicmodelexamplepicturesmallturn b};
 			\node[] (2) at (4, 0) {\epistemicmodelexamplepicturesmallturn a};
 			\node[] (3) at (6.3, 0) {\smallepistemicmodelexamplepicturesmallturn b};
 			\foreach \x in {0, 1, 2, 3} {
 				\node (\x s) at (2*\x+0.5, 0) {};
 				\node (\x t) at (2*\x-0.3, 0) {};
 			}
 			\draw[announcement] (0s) edge node[above] {$\announce \top$} (1t);
 			\draw[announcement] (1s) edge node[above] {$\announce \top$} (2t);
 			\draw[announcement] (2s) edge node[above] {$\announce p$} (3t);
 			\draw[announcement, dashed, blue] (0s) edge[bend right=40] node[below] {$\announce p$} (3t);
 			\end{tikzpicture}
 			\vspace{-1cm}
%
%
 		\end{center}
 		\caption{Strategy $\strategyeager$ bypasses the
                   non-informative announcements in $\strategy$ and
                   makes the same informative announcement (here $p$) but
                   eagerly.\label{figure:publicannouncementbypass}}
 	\end{figure}

      Every time a leaf $s$ is reached, it is considered as the
      attached epistemic model in which the game stays forever with no
      more informative announcement (\ie $s^\omega$). We then evaluate
      the winning condition $\fragmentLTLKUnonesting$-formula by model
      checking the path carried by the branch in this tree. Now, model
      checking a path against $\fragmentLTLKUnonesting$ is a problem
      in \PTIME: because we require that no temporal operator occur
      under the scope of knowledge modalities, epistemic subformulas
      occurring in the challenged $\fragmentLTLKUnonesting$-formula
      can be evaluated locally on the path so that these subformulas become
      mere propositions. It remains to model check an \LTL-formula on this
      marked path which can be done in polynomial time 
      (see for example \cite[Section 6.4.3]{demri2016temporal}).
	
      Notice that while running this depth-first-search, one needs to
      remember the current branch (needed for backtracking in the
      minmax algorithm) as well as the information used by the
      $\fragmentLTLKUnonesting$ path model-checking procedure, which
      yields a poly-size information, so that the algorithm runs in
      polynomial space.

      We now prove that this algorithm is correct. If the algorithm
      accepts the input, then we have some winning strategy in hands,
      namely some $\strategyeager$, and we are done.
      
      Conversely, assume there exists a winning
      strategy $\strategy$.
      Because any \LTLU-formula  (no $\next$ operator) is stuttering-invariant (see
      for example \cite[Th.\ 6.6.5 p.\ 184]{demri2016temporal}), and
      because in our logic \fragmentLTLKUnonesting, epistemic
      subformulas are evaluated locally in states, just as
      propositions, the outcomes of $\strategyeager$ do also satisfy
      the winning conditions by Lemma
      \ref{lemma:publicannouncements-size}. Now because of
      Lemma~\ref{lemma:publicannouncements-form}  strategy
      $\strategyeager$ will be found by the
      algorithm, which concludes.

\section{Conclusion}
\label{sec:conclusion}




We generalised the setting defined
in~\cite{DBLP:conf/ijcai/MaubertPS19} for distributed synthesis in DEL
games, moving from reachability winning conditions to ones expressed
in \LTLK, and  allowing for multiple initial positions, which
allows us to capture various semantics of strategic ability but also
makes the problem harder in the case of public actions.

We showed that the main results established
in~\cite{DBLP:conf/ijcai/MaubertPS19} can be lifted to this more
general setting: of course the problem remains undecidable, but
decidability is retrieved in the case of public actions, as well as propositional actions 
together with hierarchical information.

In the latter case the problem is, as usual, nonelementary, as
each agent in the team with a different observation of the game adds an
exponential to the cost of solving it~\cite{PR90,peterson2002decision}. But for public actions
we proved that the problem is in \TWOEXPTIME, which is optimal as this
is already the complexity of solving \LTL synthesis~\cite{pnueli1989synthesisshort}.
A central technical result was to establish the regularity
of infinite DEL game arenas generated from public actions.
We conjecture that our techniques could extend to even more
expressive winning conditions, such as ones expressible in epistemic mu-calculus. 

Regarding public announcements, we showed that 
the distributed synthesis problem is PSPACE-complete for winning
conditions in the fragment \fragmentLTLKUnonesting, when games are
round-robin and have a unique initial world.
The complexity
of generalisations such as several initial positions or  winning
conditions beyond \fragmentLTLKUnonesting is still open.




\end{document}
